\documentclass[journal]{IEEEtran}

\usepackage{graphicx}
\usepackage{mathtools}
\usepackage{amsmath,amssymb,amsfonts,amsthm}

\usepackage{bm,algorithm,algpseudocode,cite}
\allowdisplaybreaks
\usepackage[acronyms,nonumberlist]{glossaries}
\usepackage{array}
\usepackage{cite}
\usepackage{pifont}
\usepackage{nicefrac}
\usepackage{tikz,circuitikz}
\usepackage{tikzscale}
\usepackage{comment}

\newcommand{\tick}{\ding{51}}
\newcommand{\cross}{\ding{55}}

\newtheorem{lemma}{Lemma}

\newacronym{NOMA}{NOMA}{non-orthogonal multiple access}
\newacronym{SIC}{SIC}{successive interference cancellation}
\newacronym{PD}{PD}{power-domain}
\newacronym{CD}{CD}{code-domain}
\newacronym{LoS}{LoS}{line-of-sight}
\newacronym{NLoS}{NLoS}{non-LoS}
\newacronym{SINR}{SINR}{signal-to-interference-plus-noise-ratio}
\newacronym{BS}{BS}{base station}
\newacronym{SE}{SE}{spectral efficiency}
\newacronym{CSI}{CSI}{channel state information}
\newacronym[plural=UEs,
            longplural={user equipment}]{UE}{UE}{user equipment}
\newacronym{OMA}{OMA}{orthogonal multiple access}
\newacronym{PA}{PA}{pinching antenna}
\newacronym{PAS}{PAS}{pinching-antenna system}
\newacronym{PSO}{PSO}{particle swarm optimization}
\newacronym{JGD}{JGD}{joint gradient descent}
\newacronym{JO}{JO}{joint optimization}
\newacronym{AO}{AO}{alternating optimization}
\newacronym{SCA}{SCA}{successive convex approximation}
\newacronym{SOC}{SOC}{second order cone}
\newacronym{SOCP}{SOCP}{second order cone programming}
\newacronym{SDP}{SDP}{semi-definite program}
\newacronym{DL}{DL}{downlink}
\newacronym{UL}{UL}{uplink}
\newacronym{QoS}{QoS}{quality-of-service}
\newacronym{LP}{LP}{linear programming}
\newacronym{EE}{EE}{energy efficiency}
\newacronym{MM}{MM}{majorization-minimization}
\newacronym{KKT}{KKT}{Karush-Kuhn-Tucker}
\newacronym{IPA}{IPA}{interior point algorithm}
\newacronym{MIMO}{MIMO}{multiple-input-multiple-output}
\newacronym{AI}{AI}{artificial intelligence}
\newacronym{FAS}{FAS}{fluid antenna system}
\newacronym{RF}{RF}{radio-frequency}
\newacronym{WG}{WG}{waveguide}
\newacronym{HO}{HO}{heuristic optimization}
\newacronym{UB}{UB}{upper bound}
\newacronym{ULA}{ULA}{uniform linear array}

\begin{document}
\title{Max-Min Rate Fairness Optimization for Multi-User Pinching-Antenna NOMA Systems}
\author{Mahmoud AlaaEldin,~\IEEEmembership{Member,~IEEE,} Amy Inwood,~\IEEEmembership{Member,~IEEE,} Xidong Mu,~\IEEEmembership{Member,~IEEE,} and Michail Matthaiou,~\IEEEmembership{Fellow,~IEEE}



\thanks{The authors are with the Centre for Wireless Innovation (CWI), Queen’s University Belfast, Belfast BT3 9DT, U.K. (e-mail: \{m.alaaeldin, a.inwood, x.mu, m.matthaiou\}@qub.ac.uk.)}
\vspace{-0.3 in}
}
\maketitle

\begin{abstract}
Pinching-antenna systems (PASs) are gaining significant research attention due to their unique ability to overcome signal blockages by repositioning dielectric radiating elements, namely pinching antennas (PAs), along waveguides of the order of meters to create line-of-sight signal paths. As each waveguide is driven by a single radio-frequency (RF) chain, non-orthogonal multiple access (NOMA) is a natural choice for PAS-based multi-user communications. Thus, this paper investigates a PAS-enabled multi-user downlink NOMA system that comprises multiple waveguides, each containing multiple PAs. The PA positions and transmit precoding at the base station are jointly optimized with the objective of maximizing the minimum fairness rate among the users. This optimization problem is highly non-smooth and non-convex due to the rapidly oscillating coherent sums arising from inter-PA interference. To address this issue, a two-stage structured and insightful optimization algorithm is proposed. In the first stage, a coarse optimization of the PA positions and transmit power allocations is carried out using an interior-point algorithm, where the PAs channel phases are neglected, yielding solutions in the neighborhood of the true optima. The second stage is a fine-tuning stage that refines the PA positions and the transmit precoding considering the phase shifts of the PAs channels, and it consists of two sub-stages. A phase-zeroing sub-stage is first carried out to reposition each PA in its neighborhood to align the corresponding channels' phases toward zero, ensuring constructive coherent combining. Then, an alternating sub-stage alternates between forward-backward PA position refinement and successive convex approximation-based complex transmit precoding optimization until convergence to optimize the residual phase shifts of the coherent signals. Simulation results show that the proposed framework significantly outperforms the existing heuristic optimization benchmarks while offering quite lower computational time. The results also demonstrate substantial gains over a comparable multiple-input-multiple-output downlink NOMA system, and provide insights into the impact of the number of PAs, users, and transmit power on system performance.
\end{abstract}

\begin{IEEEkeywords}
Downlink NOMA, max-min fairness rate optimization,  multi-waveguide, pinching-antenna systems.
\end{IEEEkeywords}


\section{Introduction}

With the rapid proliferation of data-intensive technologies such as \gls{AI}, virtual reality, and high-definition video streaming, mobile network traffic is growing at an unprecedented rate. Meeting the capacity demands of next-generation networks increasingly requires operation at higher carrier frequencies. However, higher frequencies introduce significant propagation challenges, including greater attenuation, path loss and sensitivity to blockages \cite{mumtaz}, making a strong \gls{LoS} path between transmitter and receiver essential for reliable communication.

\Glspl{PAS} are an emerging flexible antenna technology designed to establish and maintain robust \gls{LoS} links \cite{ding_flexible_2025}. First proposed by NTT DOCOMO in 2022 \cite{fukuda_pinching_2022}, \glspl{PAS} comprise dielectric particles, known as \glspl{PA}, distributed along dielectric \glspl{WG} that can extend to several meters in length\cite{fukuda_pinching_2022, ding_flexible_2025}. This length is a key differentiator from other flexible antenna paradigms, such as \glspl{FAS} and movable antenna systems, in which antenna displacement is typically constrained to the order of wavelengths \cite{ding_flexible_2025}. The ability to reposition \glspl{PA} over long distances enables \glspl{PAS} to overcome large-scale propagation impairments by forming direct \gls{LoS} links, thereby facilitating reliable operation at higher carrier frequencies. Furthermore, \glspl{PAS} are low-cost and mechanically simple, lending themselves well to practical deployment.

\Gls{NOMA} is another technology that has attracted considerable attention as a means of addressing the growing demand for mobile network capacity. Widely regarded as a key enabler for sixth-generation (6G) networks and beyond \cite{elbayoumi_NOMA_2020, maraqa_survey_2020, islam_power_2016}, \gls{NOMA} improves spectral efficiency by allowing multiple users to share the same time-frequency resources simultaneously, thereby making more efficient use of the available wireless spectrum. In \gls{PD}-\gls{NOMA}, multiple users are allocated different power levels according to their respective channel conditions relative to the \gls{BS} while sharing a common time-frequency resource block. Then, at the receiver side, \gls{SIC} is applied to separate and decode the superimposed signals \cite{yahya_error_2023}.

In \glspl{PAS}, each \gls{WG} is typically fed by a single \gls{RF} chain. Since the number of \glspl{WG} is limited in a \gls{PAS} that typically serves more users, \gls{PD}-\gls{NOMA} can be utilized to simultaneously transmit a higher number of data streams than there are \gls{RF} chains, making \gls{NOMA} and \glspl{PAS} complementary technologies \cite{ding_flexible_2025}. Therefore, designing systems involving both technologies is an important research direction.


\vspace{-0.1 in}

\subsection{Related Work}
Multiple studies have considered the integration of \glspl{PAS} in \gls{NOMA} systems. The majority have considered systems with one \gls{WG}, or a single \gls{PA} per \gls{WG} if multiple are considered, and these are summarized below.

\subsubsection{Single WG scenarios}
The works in \cite{xie_low_2025, zeng_sum_2025, zeng_energy_2025} considered a \gls{PAS}-assisted \gls{NOMA} system with a single \gls{PA} located on a single \gls{WG} serving multiple \glspl{UE}. The authors of \cite{xie_low_2025} proposed a low-complexity \gls{DL} design that optimizes the \gls{PA} position to minimize the path loss, based on a derived closed-form expression. Power allocation was then performed among the \glspl{UE} to ensure their minimum rate requirements, while allocating the remaining power budget to the \gls{UE} with the most favorable channel. In \cite{zeng_sum_2025} and \cite{zeng_energy_2025}, the authors proposed an \gls{AO} method to maximize the sum rate and the \gls{EE} of an \gls{UL} system, respectively, where the power allocation coefficients were iteratively optimized with the \gls{PA} position using \gls{PSO} under minimum \gls{QoS} requirements for the \glspl{UE}.

Moreover, both \cite{zhou_sum_2025} and \cite{xu_qos_2025} investigated a \gls{DL} \gls{PAS}-assisted \gls{NOMA} system, where multiple \glspl{PA} on a single \gls{WG} served only two \glspl{UE}. In \cite{zhou_sum_2025}, \gls{AO} was utilized to jointly optimize the \gls{PA} position and the power allocation coefficients to maximize the sum rate, where a bisection-based search was used to optimize the \gls{PA} positions. However, the authors of \cite{xu_qos_2025} maximized the rate of one \gls{UE} considering the required \gls{QoS} of the other by iteratively optimizing the \gls{PA} position and the power allocation coefficients using \gls{AO} and \gls{SCA}. In \cite{wang_two_2025}, a \gls{PAS}-assisted \gls{DL} \gls{NOMA} system consisting of a single \gls{WG} serving multiple \glspl{UE} was studied, where both single \gls{PA} and multiple \glspl{PA} scenarios were considered. A two-stage approach was proposed where \gls{PSO} was used to optimize the \gls{PA} position in the single \gls{PA} case, while a cluster of \glspl{PA} was placed near each \gls{UE} in the multi-\gls{PA} case, followed by the power allocation among the \glspl{UE}. On the other hand, the works in \cite{wang_antenna_2025, xiao_pass_2026, mohammadzadeh_efficient_2025} considered a \gls{DL} \gls{PAS}-assisted \gls{NOMA} system consisting of a single \gls{WG} with multiple \glspl{PA} serving multiple \glspl{UE}. The authors of \cite{wang_antenna_2025} used matching theory to maximize the sum rate of a system involving a discrete \gls{PAS}, where each \gls{PA} can be activated at a discrete set of positions along the \gls{WG}. Besides, the work in \cite{xiao_pass_2026} investigated the capacity limit of \glspl{PAS} with discrete and continuous antenna positioning. In the discrete case, an exhaustive search was performed to determine the optimal antenna activations, followed by \gls{MM} for power allocation. In the continuous case, the large-scale \gls{PA} positioning and power allocation coefficients were jointly optimized via \gls{AO} and \gls{MM}, followed by a fine-tuning stage. Lastly, in \cite{mohammadzadeh_efficient_2025}, a power allocation framework was presented to minimize the total transmit power subject to the user’s individual \gls{QoS} constraints.


\subsubsection{Multi-WG scenarios}
In \cite{fu_power_2025}, a multi-\gls{WG} \gls{PAS}-aided \gls{UL} \gls{NOMA} system was studied, where each \gls{WG} deployed a single \gls{PA} and served two \glspl{UE}. A distributed iterative algorithm was proposed to minimize the transmit power by jointly optimizing the \gls{PA} positions and \gls{SIC} decoding order. In \cite{wang_antenna_2025_2}, a \gls{PAS}-assisted \gls{NOMA} system involving multiple \glspl{WG} were investigated, where each \gls{WG} was assigned to serve one \gls{UE} and contained multiple \glspl{PA} that could be activated at discrete positions. A game-theoretic algorithm was employed to jointly optimize the \gls{WG} assignment and \gls{PA} activation followed by an \gls{SCA}-based power allocation to maximize the sum rate. Finally, the authors of \cite{gan_joint_2025} studied a \gls{PAS}-enabled \gls{DL} \gls{NOMA} system, where multiple \glspl{WG} containing multiple \glspl{PA} served clusters of multiple \glspl{UE}. They proposed a \gls{JO} framework of the transmit beamforming and the \glspl{PA} placement to minimize the total transmit power. Both a gradient-based method, using \gls{MM} and a penalty dual decomposition algorithm, and a \gls{PSO}-based method were proposed. However, the authors showed that the \gls{PSO}-based approach can significantly outperform the gradient-based method due to the tendency of the latter to converge to local minima.



\begin{table*}[t]
  \caption{contributions of this work compared to existing literature on PAS-NOMA optimization methods}
    \centering
    \begin{tabular}{|c|c|c|c|c|c|c|c|c|c|c|c|c|c|}
    \hline
    
        \textbf{Contributions} & \textbf{This Paper} & \cite{xie_low_2025} & \cite{zeng_sum_2025} & \cite{zeng_energy_2025} & \cite{wang_two_2025} & \cite{zhou_sum_2025} & \cite{xu_qos_2025} & \cite{wang_antenna_2025} & \cite{xiao_pass_2026} & \cite{mohammadzadeh_efficient_2025} & \cite{fu_power_2025} & \cite{wang_antenna_2025_2} & \cite{gan_joint_2025} \\ \hline
        
        
        Multiple \glspl{WG} & \tick & \cross & \cross & \cross & \cross & \cross & \cross & \cross & \cross & \cross & \tick & \tick & \tick \\ \hline
        
        Multiple \glspl{PA} per \gls{WG} & \tick & \cross & \cross & \cross & \tick & \tick & \tick & \tick & \tick & \tick & \cross & \tick & \tick \\ \hline
        
        General number of \glspl{UE} per \gls{WG} & \tick & \tick & \tick & \tick & \tick & \cross & \cross & \tick & \tick & \tick & \cross & \cross & \tick \\ \hline
        
        
        Continuous \gls{PA} placement optimization & \tick & \tick & \tick & \tick & \tick & \tick & \tick & \cross & \tick & \cross & \cross & \cross & \tick \\ \hline
        
        
        Transmit beamforming design & \tick & \cross & \cross & \cross & \cross & \cross & \cross & \cross & \cross & \cross & \cross & \cross & \tick\\ \hline

        Structured mathematical optimization for PAs & \tick & \tick & \cross & \cross & \cross & \tick & \tick & \cross & \tick & \cross & \cross & \tick & \cross \\ \hline
        
        Fairness optimization & \tick & \cross & \cross & \cross & \cross & \cross & \cross & \cross & \cross & \cross & \cross & \cross & \cross \\ \hline
    \end{tabular}
    \vspace{-0.1 in}
    \label{tab:comparison}
\end{table*}


\subsection{Motivations and Contributions} \label{motv}

It can be observed from the literature review that most existing \gls{PAS}-aided \gls{NOMA} studies focused on single-\gls{WG} systems, while only a few considered multi-\gls{WG} deployments. Some multi-\gls{WG} works further simplify the problem by assigning each \gls{WG} to one \gls{UE} or a subset of \glspl{UE}, effectively decomposing the design into several single-\gls{WG} subproblems. However, in practical \gls{PAS}-aided \gls{DL} \gls{NOMA} systems, all \glspl{UE} receive the coherent superposition of the signals radiated by all active \glspl{PA} on all \glspl{WG}, which strongly couples the \gls{PA} positions through the received powers, interference terms, and \gls{NOMA}-\gls{SIC} constraints. Moreover, restricting \glspl{PA} to discrete activation points, as in some works, limits the spatial flexibility and achievable performance of \gls{PAS}. The only closely related multi-\gls{WG} multi-\gls{UE} design in \cite{gan_joint_2025} is based on \gls{PSO}, which effectively operates as a black-box search method and provides limited insights into the structure of the optimal \glspl{PA} placement and transmit precoding design.

The optimization of multi-user multi-\gls{WG} \gls{PAS}-aided \gls{NOMA} systems with multiple \glspl{PA} per \gls{WG} is particularly a challenging problem because the number of continuous \gls{PA} placement variables grows with the total number of \glspl{PA} deployed across all \glspl{WG}. In addition, the received signal at each \gls{UE} depends on the coherent combination of all \gls{PA} contributions, which introduces rapid phase-induced fluctuations in the objective function. These effects make direct optimization highly non-smooth and prone to poor local optima. The challenge becomes even more pronounced when the transmit precoding matrix is jointly optimized with the \gls{PA} positions under the total power and \gls{NOMA}-\gls{SIC} constraints. While \gls{HO} methods such as \gls{PSO} can partially address this difficulty, their performance strongly depends on random initialization, population size, and the number of function evaluations, often requiring repeated independent runs and extensive parameter tuning. This results in high computational burden and limited scalability as the number of \glspl{WG}, \glspl{PA}, and \glspl{UE} increases. These limitations motivate the development of a deterministic, efficient, and physically interpretable optimization framework for the problem.

Motivated by these observations, we propose, to the best of our knowledge, the first max-min rate fairness optimization framework for a multi-user multi-\gls{WG} multi-\gls{PA} \gls{PAS}-aided \gls{DL} \gls{NOMA} system. The proposed formulation allows all \glspl{PA} on all \glspl{WG} to jointly contribute to the received signals of all \glspl{UE} through continuous position optimization. The proposed method exploits the observation that, although coherent inter-\gls{PA} combining creates rapid small-scale fluctuations, the corresponding phase-free channel gains exhibit a smoother large-scale structure governed mainly by path loss and geometry. Accordingly, we develop a structured two-stage algorithm that first performs phase-relaxed coarse optimization to identify a promising region of the search space, and then refines the solution through phase alignment and AO of the \gls{PA} positions and transmit precoding.

Our main contributions are summarized as follows:
\begin{itemize}

\item We formulate a max-min rate fairness optimization problem for a \gls{DL} multi-user multi-\gls{WG} \gls{PAS}-aided \gls{NOMA} system with multiple \glspl{PA} per \gls{WG}. The proposed problem formulation jointly optimizes the \gls{PA} positions and transmit precoding matrix under the total transmit power and \gls{NOMA}-\gls{SIC} constraints. To overcome the rapid fluctuating and non-smooth nature of this problem, we develop a two-stage structured optimization algorithm that exploits the structure of the \gls{PAS} channels.

\item In the first stage, we develop an \gls{IPA} to perform coarse optimization to a phase-relaxed version of the problem, where the channel phases are neglected to capture the smooth large-scale behavior of the effective channel gains and obtain a reliable initialization.

\item In the second stage, we introduce a fine-tuning procedure that explicitly accounts for coherent phase combining. This stage first applies a phase-zeroing procedure to align the \gls{PA} channel contributions, and then alternates between forward-backward element-wise \gls{PA} position refinement and \gls{SCA}-based \gls{SOCP} transmit precoding optimization.

\item Simulation results demonstrate that the max-min rate performance of the proposed optimization framework significantly outperforms the \gls{HO} benchmark while consuming less computational time. Additionally, the performance of the proposed optimization scheme is shown to be very close to the upper bound solution obtained from the first stage, which proves the efficacy of the proposed algorithm. Moreover, it is shown that \gls{PAS}-\gls{NOMA} systems provide far superior performance compared to the conventional \gls{MIMO}-\gls{NOMA} counterpart. The results also quantify the impact of key system parameters, such as the number of \glspl{PA}, the number of \glspl{UE}, and the transmit power.

\end{itemize}

A comparison of the contributions of this work with the state-of-the-art is provided in Table \ref{tab:comparison}.

\subsection{Organization and Notations}
The paper is organized as follows: In Sec. \ref{sysmod}, the system model and the problem formulation are introduced. In Sec. \ref{coarse-opt}, we present the coarse \gls{JO} of the transmit beamforming and the PA locations, and the fine tuning stage is illustrated in Sec. \ref{fine-tune}. The simulation results and discussion are provided in Sec. \ref{Sim}, while the conclusions of this work are presented in Sec. \ref{Conc}.

\textit{Notation:}
Bold lowercase letters are used to define vectors; bold uppercase letters are used to define matrices; $|\cdot|$, $\mathfrak{Re}(\cdot)$, $\mathfrak{Im}(\cdot)$, $\arg(\cdot)$, and $(\cdot)^*$ denote the absolute, real, imaginary, angle, and conjugate of a complex number, respectively.


\section{System Model and Problem formulation}  \label{sysmod}

\subsection{System Model}

Consider the \gls{PAS}-aided \gls{DL} \gls{NOMA} system shown in Fig \ref{fig:sys_model}, where $K$ single-antenna \glspl{UE} are served simultaneously by $M$ \glspl{WG}, where $M < K$, of length $L$ m containing $N$ \glspl{PA}. The \glspl{UE} are distributed within a $D_1\times D_2$ m$^2$ region, and the location of \gls{UE} $k$ is denoted $\mathbf{u}_k = [x_k,y_k,0]^T$. The \gls{BS} is located at the origin, and the \glspl{WG} are aligned with the $x$-axis with a vertical displacement of $d$ m. The \glspl{WG} are assumed to be equally spaced along the $y$-axis at intervals of $d_y$ m, such that the $y$ position of \gls{WG} 1 is $y_1=(M-1)d_y/2$ and that of \gls{WG} $M$ is $y_M=-(M-1)d_y/2$. Therefore, the position of the $n$-th \gls{PA} on the $m$-th \gls{WG} is denoted $\mathbf{z}_m(x_{m,n}) = [x_{m,n}, y_m,d]^T$, where $x_{m,n}\in [0,L]$. Each \gls{WG} is fed by a single \gls{RF} chain.

\begin{figure}[ht]
\centering
\resizebox{0.485\textwidth}{!}{\includegraphics{sys_model.tikz}}
\caption{A PAS-based NOMA communication system involving $M$ \glspl{WG} each containing $N$ \glspl{PA} serving $K$ \glspl{UE}.}
\label{fig:sys_model}
\end{figure}

As $L$ is of the order of meters, and \glspl{PAS} are primarily intended for use in high-frequency systems, it is assumed that $x_{m,n}$ can be selected to guarantee a purely \gls{LoS} channel. Accordingly, the channel between \gls{PA} $n$ on \gls{WG} $m$ and \gls{UE} $k$ is modeled by the spherical-wave channel model \cite{zhang_beam_2022, wang_antenna_2025}:
\begin{equation}
\tilde{h}_{k}(x_{m,n}) = \frac{\eta\,\mathrm{e}^{-j\frac{2\pi}{\lambda} \left \| \mathbf{u}_k-\mathbf{z}_{m}(x_{m,n}) \right \|} }{\left \| \mathbf{u}_k - \mathbf{z}_{m}(x_{m,n}) \right \|},
\end{equation}
where $\eta=\frac{c}{4\pi f_c}$, $c$ is the speed of light in a vacuum, $f_c$ is the carrier frequency, and $\lambda$ is the wavelength. Additionally, the loss from the signal traveling within the waveguide can be modeled as \cite{ding_flexible_2025}
\begin{equation}
P_{\mathrm{L}}(x_{m,n}) = 10^{-\frac{\kappa}{20} x_{m,n}} \ \mathrm{e}^{\frac{-2\pi j}{\lambda_\mathrm{g}} x_{m,n} },
\end{equation}
where $\kappa$ is the average attenuation factor along the dielectric waveguide in dB/m, $\lambda_\mathrm{g}=\frac{\lambda}{n_\mathrm{eff}}$ is the guided wavelength, and $n_\mathrm{eff}$ is the effective refractive index of the dielectric waveguide. Hence, the channel between UE $k$ and the BS via the $n$-th \gls{PA} on the $m$-th waveguide is 
\begin{equation}
h_k(x_{m,n}) = P_\mathrm{L}(x_{m,n}) \tilde{h}_k(x_{m,n}). \label{eq:htilde}
\end{equation}
As the signals intended for different \glspl{UE} are transmitted via a single \gls{RF} chain per waveguide, they must be superimposed at the \gls{BS} using different power coefficients prior to transmission. Hence, the received \gls{NOMA} signal at \gls{UE} $k$ is
\begin{equation}
y_k = \sum_{m=1}^M \sum_{n=1}^N {h}_k(x_{m,n}) \sum_{k'=1}^K w_{mk'} s_{k'} + n_k,
\end{equation}
where $s_{k'}$ is the signal intended for \gls{UE} $k'$, $w_{mk'} = a_{mk'} \mathrm{e}^{j\theta_{mk'}}$ is the transmit precoding coefficient applied on \gls{WG} $m$ for the data stream of \gls{UE} $k'$, while $a_{mk'}$ and $\theta_{mk'}$ are the power allocation coefficient and beamforming phase shift for \gls{UE} $k'$ on \gls{WG} $m$, respectively. The precoding coefficients of the \glspl{UE} satisfy the total power constraint at the \gls{BS} as
\begin{equation}  \label{tot_pow}
\sum\nolimits_{m=1}^M \sum\nolimits_{k'=1}^K |w_{mk'}|^2 \leq P_T,
\end{equation}
where $P_T$ is the total transmit power.

In vector form, the received signal can be written as
\begin{equation}
y_k= \mathbf{h}_k^T \mathbf{w}_ks_k + \sum\nolimits_{k' \neq k} \mathbf{h}_k^T \mathbf{w}_{k'}s_{k'} + n_k,
\end{equation}
where $\mathbf{h}_k=[h_{1k}(\mathbf{x}_1), \dots, h_{Mk}(\mathbf{x}_M)]^T$ is the aggregate channel vector from all $M$ waveguides to \gls{UE} $k$, with elements 
\begin{equation}
h_{mk}(\mathbf{x}_m) = \sum_{n=1}^Nh_k(x_{m,n}), \quad \mathbf{x}_m = [x_{m,1}, \dots, x_{m,N}],
\end{equation}
while $\mathbf{w}_k=[w_{1k}, \dots, w_{Mk}]^T$ is the transmit precoding vector for \gls{UE}.

The signal gain corresponding to the message of \gls{UE} $j$ as observed at \gls{UE} $k$ is
\begin{equation}
S_{jk}(\mathbf X,\mathbf W) = \sum\nolimits_{m=1}^M w_{mj} h_{mk}(\mathbf x_m), \label{eq:Sjk_multi}
\end{equation}
where $\mathbf X = [\mathbf x_1^T,\dots,\mathbf x_M^T]^T \in \mathbb{R}_{+}^{M \times N}$ and $\mathbf W = [w_{mk}] \in \mathbb C^{M\times K}$. Without loss of generality, we assume that the \gls{SIC} ordering of the \gls{NOMA} users goes from the weakest user, \gls{UE} $1$, to the strongest user, \gls{UE}. $K$\footnote{Although the \gls{UE}'s effective channel strength depends on both its channel vector $\mathbf{h}_k$ and precoding vector $\mathbf{w}_k$, the desired users' ordering is imposed in the design by inserting the \gls{SIC} constraints in the optimization problem.} Thus, the \gls{SINR} when UE $k \ge j$ decodes the message of UE $j$ is
\begin{equation}
\Gamma_{j \to k}(\mathbf X,\mathbf W) = \tfrac{|S_{jk}(\mathbf X,\mathbf W)|^2}{\sum_{\ell=j+1}^K |S_{\ell k}(\mathbf X,\mathbf W)|^2+\sigma_k^2},
\end{equation}
with the corresponding decoding rate
\begin{equation}
R_{j \to k}(\mathbf X,\mathbf W) = B \log_2\bigl(1 + \Gamma_{j \to k}(\mathbf X,\mathbf W)\bigr),
\end{equation}
where $B$ is the bandwidth in Hz.\footnote{For notational simplicity, we set $B = 1$ Hz throughout the paper, such that the rate is equivalent to the spectral efficiency.}
The own-message rate is $R_k(\mathbf X,\mathbf W)\triangleq R_{k\to k}(\mathbf X,\mathbf W)$. Efficient \gls{SIC} requires satisfying the constraints
\begin{equation}
R_k(\mathbf X,\mathbf W)\le R_{k\to m'}(\mathbf X,\mathbf W),\qquad \forall\,1\le k < m' \le K.
\label{eq:sic_multi}
\end{equation}

\vspace{-0.2 in}
\subsection{Problem Formulation}  \label{prob-form}

The objective of this paper is to optimize the positions of all \glspl{PA} on all \glspl{WG}, $\mathbf{X}$, and the active beamforming matrix, $\mathbf{W}$, so as to maximize the minimum own-user rate under the efficient \gls{SIC} constraints in \eqref{eq:sic_multi} and the total power constraint in \eqref{tot_pow}. Moreover, the \gls{PA} locations on each \gls{WG} must obey the minimum spacing constraints between adjacent \glspl{PA} to mitigate mutual coupling effects, where
\begin{equation}
x_{m,n+1}\!-x_{m,n}\ge q,\,\,\, n=1,\dots,N-1,\,\, m=1,\dots,M,
\end{equation}
with $x_{m,n} \in [0, L], \ \forall m,n$.
The corresponding max-min rate fairness problem can then be written as  \vspace{-0.05 in}
\begin{subequations} \label{prob_form}
\begin{align}
\max_{\mathbf X,\mathbf W} \quad & \min_{k\in\{1,\dots,K\}} R_k(\mathbf X,\mathbf W), \label{eq:orig_multi_obj} \\
\text{s.t.} \quad  & R_{k \to m'}(\mathbf X,\mathbf W)  \geq  R_k(\mathbf X,\mathbf W), \quad \forall 1\!\le\! k\!<m'\!\le\! K, \label{eq:orig_multi_sic} \\
& \left\| \mathbf{W} \right\|_F^2 \le P_T, \label{eq:orig_multi_power} \\
& x_{m,n+1} - x_{m,n}\ge q, \quad \forall m,\; n=1,\dots,N-1, \label{eq:orig_multi_spacing} \\
& 0\le x_{m,n}\le L, \quad \forall m,n. \label{eq:orig_multi_bounds}
\end{align}
\end{subequations}
As the rate expressions depend on the coherent sums across all \glspl{WG} and all \glspl{PA}, $S_{jk}(\mathbf X,\mathbf W)$, which introduces rapid phase-induced fluctuations in the objective function and constraints, problem \eqref{prob_form} is highly non-smooth and non-convex. This makes direct optimization very prone to poor local optimum points. To address these challenges, we are going to develop a two-stage optimization framework which consists of a coarse stage and a fine-tuning stage presented in Sec. \ref{coarse-opt} and Sec. \ref{fine-tune}, respectively.



\section{Stage 1: Coarse optimization stage of transmit precoding and pinching beamforming}  \label{coarse-opt}

In this section, to solve problem \eqref{prob_form}, we first discuss a coarse optimization framework that specializes the general model to amplitude-only per-waveguide precoding (i.e. letting  $\theta_{mk}=0$ in the generic coefficient $w_{mk} = a_{mk}e^{j\theta_{mk}}$), and optimizes real-valued amplitudes $a_{mk} \in \mathbb R$. This transformation removes the rapid fluctuations from the objective function and the constraints expressions, resulting in a smooth optimization problem that can be solved using gradient-based methods.\footnote{The solution of the coarse optimization stage does not reflect the actual achievable fairness rate as it neglects the effect of the phase shifts of the channels of the \glspl{PA} by assuming perfect coherent channel combining at the \glspl{UE}. However, it is an excellent first stage in obtaining the optimized \gls{PA} locations and power allocations for each \gls{WG}, where it is followed by a fine-tuning stage that considers the phase shifts, as discussed in Sec. \ref{fine-tune}.} It also provides the additional benefit of removing the destructive interference possible from the fast phase oscillations of the coherent sums.

\subsection{Joint Optimization Step}

In the coarse optimization stage, the complex channels are replaced by their magnitudes. Define
\begin{equation}
\bar g_{mk}(\mathbf x_m) = \sum\nolimits_{n=1}^N |h_k(x_{m,n})|,
\end{equation}
and the corresponding magnitude-based coherent sum
\begin{equation}
\bar S_{jk}(\mathbf X,\mathbf A) = \sum\nolimits_{m=1}^M a_{mj}\bar g_{mk}(\mathbf x_m). \label{eq:coarse_S_multi}
\end{equation}
The own-user and \gls{SIC} rates in this coarse stage are therefore
\begin{align}
\bar R_k &= \log_2\!\left(1 + \frac{|\bar S_{kk}|^2}{\sum_{\ell=k+1}^K |\bar S_{\ell k}|^2+\sigma^2}\right), \\
\bar R_{k\to m'} &= \log_2\!\left(1 + \frac{|\bar S_{km'}|^2}{\sum_{\ell=k+1}^K |\bar S_{\ell m'}|^2+\sigma^2}\right),  \quad m'>k.
\end{align}

To further obtain a smooth and differentiable objective function, the minimum-rate operator is replaced by the log-sum-exp surrogate
\begin{equation}
\phi_\tau(\mathbf r) = \tau \log \Bigl( \sum\nolimits_{k=1}^K e^{-r_k/\tau}\Bigr), \qquad r_k=R_k(\mathbf X, \mathbf A),
\end{equation}
where $\tau>0$ is small. Since $\phi_\tau(\mathbf r)\to-\min_k r_k$ as $\tau\to 0^+$, minimizing $\phi_\tau$ is equivalent to maximizing a smooth approximation of the minimum user rate.

The \gls{JO} step jointly optimizes the \gls{PA} positions, $\mathbf{X}$, on all \glspl{WG} and the amplitude coefficients, $\mathbf{A}$, for all user streams at the same time without creating two separate sub-problems. Let $\mathbf a {=} \mathrm{vec}(\mathbf A)\in\mathbb R^{MK\times 1}$ and $\mathbf x {=} \mathrm{vec}(\mathbf X)\in\mathbb R^{MN\times 1}$, and define the global decision vector as follows:
\begin{equation}
\mathbf z = [\mathbf x^T, \mathbf a^T]^T \in \mathbb R^{(MN+MK) \times 1}.
\end{equation}
Accordingly, the coarse \gls{JO} over $\mathbf{z}$ can be formulated as
\begin{subequations}
\begin{align}
\min_{\mathbf z}\quad & f_0(\mathbf z) = \tau \log \Bigl( \sum\nolimits_{k=1}^K e^{-\bar R_k(\mathbf z)/\tau}\Bigr),   \label{eq:coarse_multi_obj}\\
\text{s.t.}\quad & c_{k,m'}(\mathbf z)=\bar R_k(\mathbf z)-\bar R_{k\to m'}(\mathbf z)\le 0, \ \forall k<m',  \\
& c_P(\mathbf z) = \sum\nolimits_{m=1}^M \sum\nolimits_{k=1}^K a_{mk}^2 - P_T \le 0,  \\
& \mathbf A_{\mathrm{sp}}\mathbf z\le \mathbf b_{\mathrm{sp}}, \qquad \boldsymbol\ell\le \mathbf z\le \mathbf u,  \label{eq:coarse_multi_lin}
\end{align}
\end{subequations}
where $\mathbf A_{\mathrm{sp}}\mathbf z\le \mathbf b_{\mathrm{sp}}$ stacks the spacing inequalities of all $M$ WGs. Specifically, for each WG $m$, the block associated with $\mathbf x_m$ enforces $x_{m,n}-x_{m,n+1}\le -q$ for $n=1,\dots,N-1$, and the full matrix is obtained by concatenating the $M$ blocks along the diagonal of $\mathbf{A}_{\mathrm{sp}}$.

\vspace{-0.1 in}

\subsection{Interior-Point Solution}
An interior-point strategy solves \eqref{eq:coarse_multi_obj}-\eqref{eq:coarse_multi_lin} by replacing the nonlinear, linear, and bound inequalities by a sequence of logarithmic barrier subproblems. Let the stacked nonlinear constraints be
\begin{equation}
    \mathbf c(\mathbf z)=\bigl[c_{1,2},\dots,c_{K-1,K},c_P\bigr]^T,
\end{equation}
and gather all inequalities into $\tilde{\mathbf c}(\mathbf z)<\mathbf 0$. For barrier parameter $\mu>0$, the penalized objective is
\begin{equation}
\Psi_\mu(\mathbf z)=f_0(\mathbf z) - \mu \sum\nolimits_i \log \bigl( -\tilde c_i(\mathbf z) \bigr).
\label{eq:barrier_multi}
\end{equation}
The logarithmic terms force all iterates to remain strictly feasible while the barrier parameter is gradually reduced.

Introducing the multipliers $\boldsymbol\lambda\succeq\mathbf 0$ for the nonlinear inequalities, $\boldsymbol\nu\succeq\mathbf 0$ for the linear spacing inequalities, and $\mathbf s_L,\mathbf s_U\succeq\mathbf 0$ for the lower and upper bounds, respectively, the Lagrangian is
\begin{multline}
\mathcal L(\mathbf z,\boldsymbol\lambda,\boldsymbol\nu)=f_0(\mathbf z)+\boldsymbol\lambda^T\mathbf c(\mathbf z)+\boldsymbol\nu^T(\mathbf A_{\mathrm{sp}}\mathbf z-\mathbf b_{\mathrm{sp}}) \\
+\mathbf s_L^T(\boldsymbol\ell-\mathbf z)+\mathbf s_U^T(\mathbf z-\mathbf u).
\end{multline}
The perturbed \gls{KKT} conditions are written as
\begin{equation}
\begin{aligned}
&\nabla f_0(\mathbf z)+\mathbf J_c(\mathbf z)^T\boldsymbol\lambda+\mathbf A_{\mathrm{sp}}^T\boldsymbol\nu-\mathbf s_L+\mathbf s_U=\mathbf 0,\\
&\mathbf c(\mathbf z)+\mathbf y=\mathbf 0,\quad \mathbf A_{\mathrm{sp}}\mathbf z-\mathbf b_{\mathrm{sp}}+\mathbf t=\mathbf 0,\\
&\mathbf z-\boldsymbol\ell-\mathbf r_L=\mathbf 0,\quad \mathbf u-\mathbf z-\mathbf r_U=\mathbf 0,\\
&\boldsymbol\Lambda\mathbf y=\mu\mathbf 1,\quad \mathbf N\mathbf t=\mu\mathbf 1,\quad \mathbf S_L\mathbf r_L=\mu\mathbf 1,\quad \mathbf S_U\mathbf r_U=\mu\mathbf 1,
\end{aligned}
\label{eq:kkt_multi}
\end{equation}
where $\mathbf y,\mathbf t,\mathbf r_L$, and $\mathbf r_U$ are strictly positive slack vectors. The first line corresponds to stationarity, the second and third lines impose primal feasibility through the slack representation, and the last line enforces perturbed complementarity. Dual feasibility is captured by the nonnegativity of the multiplier vectors. The Jacobian of the nonlinear constraints is
\begin{equation}
\mathbf J_c(\mathbf z)=\frac{\partial \mathbf c(\mathbf z)}{\partial \mathbf z^T}
=\begin{bmatrix}
\nabla c_1(\mathbf z)^T\\[-0.2em]
\vdots\\[-0.2em]
\nabla c_{K(K-1)/2+1}(\mathbf z)^T
\end{bmatrix}.
\end{equation}
Each row quantifies the first-order sensitivity of one SIC or power constraint to perturbations in all PA positions and all amplitude coefficients across the $M$ WGs.

At iteration $t$, the nonlinear \gls{KKT} residuals are linearized around the current interior point. This means replacing every nonlinear term by its first-order Taylor expansion, which yields a linear system in the increments $(\Delta\mathbf z,\Delta\boldsymbol\lambda,\Delta\boldsymbol\nu,\dots)$. After eliminating the slack increments, the core Newton system takes the saddle-point form
\begin{equation}
\begin{bmatrix}
\mathbf H_L & \mathbf J_c^T & \mathbf A_{\mathrm{sp}}^T \\
\mathbf J_c & -\mathbf W_c & \mathbf 0 \\
\mathbf A_{\mathrm{sp}} & \mathbf 0 & -\mathbf W_A
\end{bmatrix}
\begin{bmatrix}
\Delta\mathbf z \\ \Delta\boldsymbol\lambda \\ \Delta\boldsymbol\nu
\end{bmatrix}
=-
\begin{bmatrix}
\mathbf r_{\mathrm{sta}} \\ \mathbf r_{\mathrm{nl}} \\ \mathbf r_{\mathrm{lin}}
\end{bmatrix},
\label{eq:saddle_multi}
\end{equation}
where $\mathbf H_L\!=\!\nabla^2_{\mathbf z\mathbf z}\mathcal L(\mathbf z,\boldsymbol\lambda,\boldsymbol\nu)$, and $\mathbf W_c$ and $\mathbf W_A$ are positive diagonal matrices produced by the barrier terms. Solving \eqref{eq:saddle_multi} gives the Newton direction. A damped step then preserves strict feasibility and reduces the barrier merit function. Next, the barrier parameter is decreased, and the process repeats.

The gradient of the smooth fairness surrogate is determined by the soft-min weights as follows:
\begin{equation}
\bar\omega_k(\mathbf z) = \frac{e^{-\bar R_k(\mathbf z)/\tau}}{\sum_{i=1}^K e^{-\bar R_i(\mathbf z)/\tau}},  \qquad \sum\nolimits_{k=1}^K\bar\omega_k=1,
\end{equation}
leading to
\begin{equation}
\nabla f_0(\mathbf z) = -\sum\nolimits_{k=1}^K \bar\omega_k(\mathbf z)\nabla \bar R_k(\mathbf z). \label{eq:grad_multi}
\end{equation}
The Hessian matrix of $f_0(\mathbf z)$ is given as
\begin{multline}
\nabla^2 f_0(\mathbf z) = -\sum\nolimits_{k=1}^K \bar\omega_k \nabla^2 \bar R_k + \frac{1}{\tau} \sum\nolimits_{k=1}^K \bar\omega_k \nabla \bar R_k \nabla \bar R_k^T  \\
-\frac{1}{\tau} \Bigl( \sum\nolimits_{k=1}^K \bar\omega_k \nabla \bar R_k \Bigr) \Bigl( \sum\nolimits_{k=1}^K \bar\omega_k \nabla \bar R_k \Bigr)^T,   \label{eq:hess_multi}
\end{multline}
and the Hessian of the Lagrangian is
\begin{equation}
\nabla^2_{\mathbf z\mathbf z}\mathcal L(\mathbf z,\boldsymbol\lambda,\boldsymbol\nu) = \nabla^2 f_0(\mathbf z) + \sum\nolimits_i \lambda_i\nabla^2 c_i(\mathbf z).
    \label{eq:hessL_multi}
\end{equation}
To expose the rate derivatives, define
\begin{align}
Q_k(\mathbf X,\mathbf A) &= |\bar S_{kk}(\mathbf X,\mathbf A)|^2, \\
I_k(\mathbf X,\mathbf A) &= \sum\nolimits_{\ell=k+1}^K |\bar S_{\ell k}(\mathbf X, \mathbf A)|^2 + \sigma^2,
\end{align}
so that $\bar R_k=\log_2(1+Q_k/I_k)$. Letting $\xi_k=Q_k/I_k$, we have
\begin{equation}
    \nabla \bar R_k=\frac{1}{\ln 2}\frac{1}{1+\xi_k}\nabla \xi_k,
\end{equation}
\begin{equation}
    \nabla^2\bar R_k=\frac{1}{\ln 2}\left[\frac{\nabla^2\xi_k}{1+\xi_k}-\frac{\nabla\xi_k\nabla\xi_k^T}{(1+\xi_k)^2}\right],
\end{equation}
with
\begin{equation}
    \nabla\xi_k=\frac{I_k\nabla Q_k-Q_k\nabla I_k}{I_k^2}.
\end{equation}
The derivatives of $Q_k$ and $I_k$ involve the sensitivities of the magnitude-based coherent sums in \eqref{eq:coarse_S_multi} with respect to every $x_{m,n}$ and $a_{mk}$. For example,
\begin{equation}
    \frac{\partial \bar S_{jk}}{\partial a_{mj}}=\bar g_{mk}(\mathbf x_m),
    \qquad
    \frac{\partial \bar S_{jk}}{\partial x_{m,n}}=a_{mj}\frac{\partial \bar g_{mk}(\mathbf x_m)}{\partial x_{m,n}},
\end{equation}
where
\begin{equation}
    \frac{\partial \bar g_{mk}(\mathbf x_m)}{\partial x_{m,n}}=\frac{\partial |h_k(x_{m,n})|}{\partial x_{m,n}}.
\end{equation}
These first- and second-order derivatives are then assembled into \eqref{eq:grad_multi}-\eqref{eq:hessL_multi}, which drive the interior-point Newton iterations. At each iteration, the variable vectors are updated as
\begin{align}
\mathbf z ^{(t+1)} = \mathbf z^{(t)} &+ \alpha_{\mathrm{pri}} \Delta \mathbf z, \quad
\boldsymbol \nu^{(t+1)} = \boldsymbol \nu^{(t)} + \alpha_{\mathrm{dual}} \Delta \boldsymbol \nu, \nonumber  \\
& \boldsymbol \lambda^{(t+1)} = \boldsymbol \lambda^{(t)} + \alpha_{\mathrm{dual}} \Delta \boldsymbol \lambda,
\end{align}
with step sizes, $\alpha_{\mathrm{pri}}$ and $\alpha_{\mathrm{dual}}$, chosen to preserve $\boldsymbol \nu > \boldsymbol 0$ and $\boldsymbol \lambda > \boldsymbol 0$. The barrier parameter $\mu$ is gradually reduced until convergence at a \gls{KKT} point is reached. This primal-dual interior-point framework enables efficient \gls{JO} of the transmit powers, $\mathbf{a}$, and the geometric parameters, $\mathbf{X}$, under the nonlinear \gls{NOMA} \gls{SINR} constraints for any $K$.

\vspace{-0.1 in}
\subsection{Optimal Real-valued Precoding Step}




After performing the \gls{JO} step using the interior point algorithm, we test its solution by deriving the optimal precoding vectors at the \gls{BS} given the optimized \gls{PA} locations of the \gls{JO} step. Specifically, we solve a total sum-power minimization problem such that all the \gls{NOMA} rates, including the \gls{SIC} rates, are greater than the max-min achievable rate obtained from the \gls{JO} step in the previous subsection. Since we assume real and element-wise positive channels in the coarse optimization stage, the optimal precoding vectors should also be real. Fortunately, we prove that the \gls{NOMA} real precoding case has an optimal solution with a specific structure.

Let $\mathbf{X}^{\mathrm{Joint}}$ denote the \gls{PA} locations obtained from the coarse \gls{JO} stage, and let $\gamma^{\mathrm{Joint}}$ denote the corresponding max-min achievable \gls{SINR}. Since the coarse stage is carried out using real and element-wise nonnegative effective channels, the subsequent active-precoding refinement can be restricted to real-valued precoding vectors without loss of optimality. In particular, for \gls{UE} $k$, define the real aggregate channel vector
\begin{equation}
\mathbf{h}_k=
\Big[h_{1k}(\mathbf{x}^{\mathrm{Joint}}_1),\dots,h_{Mk}(\mathbf{x}^{\mathrm{Joint}}_M)\Big]^T
\in \mathbb{R}_+^{M},
\end{equation}
where $\mathbf{x}^{\mathrm{Joint}}_m$ is the optimized \gls{PA}-location vector on \gls{WG} $m$. Moreover, let $\mathbf{a}_k\in\mathbb{R}^{M}$ denote the real precoding vector of \gls{UE} $k$, and collect all user precoders in the real precoding matrix
\begin{equation}
\mathbf{A} \triangleq \big[\mathbf{a}_1,\dots,\mathbf{a}_K\big]\in\mathbb{R}^{M\times K}.
\end{equation}
For a fixed target $\gamma^{\mathrm{Joint}}$, the real-valued precoding refinement solves the sum-power minimization problem
\begin{subequations}
\label{eq:real_precoding_power_min}
\begin{align}
\min_{\mathbf{A}}\quad & \|\mathbf{A}\|_F^2,
\label{eq:real_precoding_power_min_obj}\\
\text{s.t.}\quad
& \frac{|\mathbf{h}_m^T\mathbf{a}_k|^2}
{\sum_{\ell=k+1}^K |\mathbf{h}_m^T\mathbf{a}_\ell|^2+\sigma_m^2}
\!\ge\! \gamma^{\mathrm{Joint}},
\quad \forall\, 1\!\le\! k\!\le\! m\!\le\! K.
\label{eq:real_precoding_power_min_sinr}
\end{align}
\end{subequations} \vspace{-0.05 in}
Problem \eqref{eq:real_precoding_power_min} checks whether the coarse-stage \gls{SINR} target can be obtained with minimum transmit power after optimally re-adjusting the active precoder for the fixed coarse-stage \gls{PA} positions.

\begin{lemma}
\label{lem:real_precoding_socp}
For the real-positive channel model induced by $\mathbf{X}^{\mathrm{Joint}}$, problem \eqref{eq:real_precoding_power_min} admits an optimal real solution and can be equivalently reformulated as a convex \gls{SOCP}. Moreover, if $\boldsymbol{\mu}^\star=\{\mu_{m,k}^\star\}_{m\ge k}$ denotes an optimal dual solution, then the optimal precoding vectors satisfy
\begin{equation}
\mathbf{a}_k^\star \in \mathrm{Null} \big( \mathbf{Q}_k(\boldsymbol{\mu}^\star) \big),\qquad k=1,\dots,K,
\label{eq:ak_nullspace}
\end{equation}
where
\begin{equation}
\mathbf{Q}_k(\boldsymbol{\mu})=
\mathbf{I}
+\sum_{j<k}\sum_{m=j}^K \mu_{m,j}\gamma^{\mathrm{Joint}}\,\mathbf{h}_m\mathbf{h}_m^T
-\sum_{m=k}^K \mu_{m,k}\,\mathbf{h}_m\mathbf{h}_m^T,
\label{eq:Qk_def}
\end{equation}
and $\boldsymbol{\mu}^\star$ is the solution of the dual \gls{SDP} \vspace{-0.1 in}
\begin{subequations}  \label{eq:dual_SDP}
\begin{align}
\max_{\boldsymbol{\mu}}\quad & \sum\nolimits_{k=1}^K\sum\nolimits_{m=k}^K \mu_{m,k}\gamma_k \sigma_m^2, \\
\text{s.t.}\quad & \mu_{m,k}\ge 0,\qquad \forall m\ge k,\\
& \mathbf{Q}_k(\boldsymbol{\mu}) \succeq \mathbf{0},\qquad k = 1,\dots,K.
\end{align}
\end{subequations}
In the generic case where $\mathrm{Null}(\mathbf{Q}_k(\boldsymbol{\mu}^\star))$ is one-dimensional, one can write
\begin{equation}
\mathbf{a}_k^\star=\sqrt{p_k^\star}\,\mathbf{v}_k,
\label{eq:ak_dir_power}
\end{equation}
where $\mathbf{v}_k$ is the normalized eigenvector associated with the smallest eigenvalue of $\mathbf{Q}_k(\boldsymbol{\mu}^\star)$, and the optimal powers $\{p_k^\star\}$ are obtained by the backward recursion
\begin{equation}
p_K^\star=\max_{m\ge K}\frac{\gamma^{\mathrm{Joint}}\sigma_m^2}{g_{m,K}},
\qquad
g_{m,k}\triangleq (\mathbf{h}_m^T\mathbf{v}_k)^2,
\label{eq:pK_star_joint}
\end{equation}
\begin{equation}
p_k^\star \! = \max_{m\ge k} \frac{\gamma^{\mathrm{Joint}} \left(\!\sum_{\ell=k+1}^K p_\ell^\star g_{m,\ell}\! + \!\sigma_m^2\!\right)} {g_{m,k}}, \quad \!k=K\!-\!1,\dots,1. \label{eq:pk_star_joint}
\end{equation}
\end{lemma}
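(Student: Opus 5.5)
The plan follows the classical downlink beamforming argument of Bengtsson--Ottersten and Wiesel--Eldar--Shamai, adapted to the nested NOMA interference structure in \eqref{eq:real_precoding_power_min_sinr}. The argument has four parts: an SOCP reformulation, Lagrangian duality, the null-space characterization of the precoders, and the power recursion.

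\emph{SOCP reformulation.} The key observation is that each constraint depends on $\mathbf a_k$ only through the scalar $\mathbf h_m^T\mathbf a_k$, and this scalar is invariant to the sign ambiguity $\mathbf a_k\mapsto-\mathbf a_k$. Hence, without loss of generality, we may impose $\mathbf h_m^T\mathbf a_k\ge 0$. More precisely, we impose it for the binding direction; since $\mathbf h_m\in\mathbb R_+^M$, restricting to $\mathbf a_k$ with nonnegative entries is harmless in the generic case. Each SINR constraint then becomes
\begin{equation*}
\tfrac{1}{\sqrt{\gamma^{\mathrm{Joint}}}}\,\mathbf h_m^T\mathbf a_k\ \ge\ \bigl\|[\mathbf h_m^T\mathbf a_{k+1},\dots,\mathbf h_m^T\mathbf a_K,\sigma_m]\bigr\|_2 ,
\end{equation*}
which is a second-order cone. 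With the epigraph of $\|\mathbf A\|_F$, the problem is an SOCP. A real optimal solution exists for two reasons. First, the feasible set is closed, and it is nonempty because the coarse point is feasible. Second, the objective is coercive.

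\emph{Duality.} Next, we write the Lagrangian of the original quadratic form with multipliers $\mu_{m,k}\ge0$ attached to
\begin{equation*}
\gamma\Bigl(\sum_{\ell>k}(\mathbf h_m^T\mathbf a_\ell)^2+\sigma_m^2\Bigr)-(\mathbf h_m^T\mathbf a_k)^2\le 0 .
\end{equation*}
Collecting the terms quadratic in $\mathbf a_k$, we observe that $\mathbf a_k$ appears as \emph{interference} in the constraints of all users $j<k$ (indices $m\ge j$) and as the \emph{signal} in its own constraints ($m\ge k$). The Lagrangian therefore equals
\begin{equation*}
\sum_k\mathbf a_k^T\mathbf Q_k(\boldsymbol\mu)\mathbf a_k+\gamma\sum_{k}\sum_{m\ge k}\mu_{m,k}\sigma_m^2 ,
\end{equation*}
with $\mathbf Q_k$ exactly as in \eqref{eq:Qk_def}. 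Minimizing over $\mathbf A$ gives a finite value iff $\mathbf Q_k\succeq\mathbf 0$ for all $k$, which yields the dual SDP \eqref{eq:dual_SDP}.

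\emph{Zero duality gap.} This is the main obstacle, since the primal is a nonconvex QCQP. The plan is to invoke strong duality of the equivalent SOCP: Slater's condition holds by scaling any feasible $\mathbf A$ up. Then, as in Wiesel et al., we show that the SOCP's Lagrange dual coincides with \eqref{eq:dual_SDP}. The cone multipliers are aligned with the constraint vectors, and eliminating them reproduces the same conditions $\mathbf Q_k\succeq\mathbf 0$. Alternatively, the S-lemma or rank-one SDP relaxation argument applies, since each constraint has a rank-one structure.

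\emph{Null-space characterization.} With a zero gap, complementary slackness and stationarity give
\begin{equation*}
\mathbf a_k^{\star T}\mathbf Q_k(\boldsymbol\mu^\star)\mathbf a_k^\star=0 .
\end{equation*}
Together with $\mathbf Q_k\succeq\mathbf 0$, this implies $\mathbf Q_k\mathbf a_k^\star=\mathbf 0$, which is \eqref{eq:ak_nullspace}. If the null space is one-dimensional, it is spanned by the eigenvector of the smallest (zero) eigenvalue, which gives \eqref{eq:ak_dir_power}.

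\emph{Power recursion.} Finally, with the directions $\mathbf v_k$ fixed, the constraints become linear in $\{p_k\}$:
\begin{equation*}
p_k g_{m,k}\ge\gamma\Bigl(\sum_{\ell>k}p_\ell g_{m,\ell}+\sigma_m^2\Bigr).
\end{equation*}
This system is triangular: user $k$'s constraints involve only $p_\ell$ with $\ell\ge k$. Minimizing the total power $\sum_k p_k$ is therefore achieved by taking each $p_k$ as small as possible, proceeding backward from $k=K$. This is valid because decreasing $p_\ell$ only relaxes the constraints for indices below $\ell$. The resulting minimal values are \eqref{eq:pK_star_joint}--\eqref{eq:pk_star_joint}.
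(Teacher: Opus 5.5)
Your outline follows the paper's appendix step for step: sign normalization to reach the SOC form, separation of the Lagrangian into $\sum_k\mathbf{a}_k^T\mathbf{Q}_k(\boldsymbol\mu)\mathbf{a}_k$, the dual SDP, the null-space condition, and the triangular LP for the powers. Existence and Slater are fine, and your null-space and recursion steps are correct provided there is a zero duality gap.

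\textbf{The gap: zero duality gap.} The gap is exactly the step you single out, and none of your three suggestions closes it. The paper does not close it either: Slater's condition for the SOCP gives a zero gap between the SOCP and its own conic dual, not between the nonconvex QCQP and \eqref{eq:dual_SDP}.

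Write the SOCP constraints as $\sqrt{\gamma^{\mathrm{Joint}}}\,\|\cdot\|_2-\mathbf{h}_m^T\mathbf{a}_k\le 0$ with KKT multipliers $\lambda_{m,k}$, and set $\mu_{m,k}=\lambda_{m,k}/(2\mathbf{h}_m^T\mathbf{a}_k^\star)$. You then do obtain $\mathbf{Q}_k(\boldsymbol\mu)\mathbf{a}_k^\star=\mathbf{0}$ and $\gamma^{\mathrm{Joint}}\sum_k\sum_{m\ge k}\mu_{m,k}\sigma_m^2=\|\mathbf{A}^\star\|_F^2$. What is missing is dual feasibility, $\mathbf{Q}_k(\boldsymbol\mu)\succeq\mathbf{0}$.

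In the classical downlink problem of Wiesel et al.\ this is automatic. Each beamformer is the desired signal in exactly one constraint, so $\mathbf{Q}_k$ is a positive definite matrix minus a rank-one term, and by interlacing such a matrix with a nonzero null vector is PSD. Here $\mathbf{a}_k$ is the desired signal in the $K-k+1$ constraints $(m,k)$, $m\ge k$. The subtracted term $\sum_{m\ge k}\mu_{m,k}\mathbf{h}_m\mathbf{h}_m^T$ therefore has rank up to $\min(M,K-k+1)$, and a null vector no longer rules out a negative eigenvalue.

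Your other two routes fail as well:
\begin{itemize}
\item The S-lemma is lossless only for one (complex: two) quadratic constraints, not $K(K+1)/2$ of them.
\item Rank-one constraint matrices do not by themselves make an SDR tight. The Pataki-type bound $\sum_k r_k(r_k+1)/2\le K(K+1)/2$ forces every $r_k=1$ only when $K\le 2$, and multicast beamforming is the standard non-tight example with rank-one constraints.
\end{itemize}
Any tightness here must come from the positivity of the channels, which your argument never uses.

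\textbf{What positivity does give you.} For $k=K$ the rank-one interlacing argument applies. For $k=1$, $\mathbf{Q}_1=\mathbf{I}-\mathbf{N}$ with $\mathbf{N}=\sum_m\mu_{m,1}\mathbf{h}_m\mathbf{h}_m^T$ entrywise positive. Stationarity gives $\mathbf{a}_1^\star=\sum_m\mu_{m,1}(\mathbf{h}_m^T\mathbf{a}_1^\star)\mathbf{h}_m$, which is entrywise positive. Perron--Frobenius then forces $\lambda_{\max}(\mathbf{N})=1$, so $\mathbf{Q}_1\succeq\mathbf{0}$.

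The SOC step is also legitimate at these two indices. For $k=1$, replace $\mathbf{a}_1$ by its entrywise absolute value; this is harmless because $\mathbf{a}_1$ never acts as interference. For $k=K$, a single sign flip suffices. With both pieces, the lemma is settled for $K=2$.

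For $K\ge 3$, the intermediate streams $1<k<K$ break both steps. Flipping the sign of $\mathbf{a}_k$ changes all $\mathbf{h}_m^T\mathbf{a}_k$, $m\ge k$, together, so it cannot make them all nonnegative unless they already agree in sign. Taking entrywise absolute values can increase the interference $|\mathbf{h}_m^T\mathbf{a}_k|$ that $\mathbf{a}_k$ causes in the constraints of streams $j<k$. So your claim that restricting to nonnegative entries is harmless is unjustified there, and the SOCP is a priori only an inner restriction of \eqref{eq:real_precoding_power_min}. Closing the duality gap for these streams needs an idea that neither you nor the paper supplies.
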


\begin{IEEEproof}
The proof is provided in Appendix A.
\end{IEEEproof}

Lemma~\ref{lem:real_precoding_socp} provides a computationally efficient post-processing step after the coarse \gls{JO}. Specifically, once $\mathbf{X}^{\mathrm{Joint}}$ and $\gamma^{\mathrm{Joint}}$ are obtained, the minimum required real transmit power is computed from \eqref{eq:real_precoding_power_min}. Let $P_{\min}^{\mathrm{Joint}}\triangleq \|\mathbf{A}^\star\|_F^2$ denote the optimized value. If $P_{\min}^{\mathrm{Joint}}=P_T$, then the coarse-stage target already fully exploits the available transmit power, and the real-valued active-precoding verification is complete; the algorithm then proceeds directly to the fine-tuning stage. However, if $P_{\min}^{\mathrm{Joint}}<P_T$, then the available power budget is not fully utilized and the target \gls{SINR} can be further increased.

In this case, a short outer bisection can be applied over a refined target $\gamma\ge \gamma^{\mathrm{Joint}}$. At each bisection iteration, one solves \eqref{eq:real_precoding_power_min} with $\gamma^{\mathrm{Joint}}$ replaced by $\gamma$, and computes the corresponding minimum power $P_{\min}(\gamma)$. If $P_{\min}(\gamma)\le P_T$, then $\gamma$ is feasible and the lower bisection bound is increased; otherwise, the upper bound is decreased. The iterations continue until $P_{\min}(\gamma)$ reaches $P_T$ within a prescribed tolerance. The resulting refined target is then used as the final output of the real-valued precoding refinement before entering the subsequent fine-tuning stage.


\vspace{-0.1 in}
\subsection{Computational Complexity Analysis of the Coarse Stage} \label{subsec:complexity_multiwg}

The proposed optimization consists of the coarse JO of the PA locations and real amplitudes, followed by the real-valued active-precoding refinement for the fixed PA locations. Let
\begin{equation}
 n_z=M(N+K),\quad C_{\rm nl}=\tfrac{K(K-1)}{2}+1,\quad C_{\rm sp}=M(N-1),
\end{equation}
where $n_z$ is the number of coarse JO variables, $C_{\rm nl}$ is the number of nonlinear SIC and power constraints, and $C_{\rm sp}$ is the number of PA-spacing constraints. After eliminating the slack variables, the effective Newton-system dimension is
\begin{equation}
 n_{\rm IP}=n_z+C_{\rm nl}+C_{\rm sp}.
\end{equation}
At each interior-point iteration, evaluating the PA-user channel magnitudes and derivatives costs $\mathcal{O}(MNK)$, forming the coherent sums costs $\mathcal{O}(MK^2)$, and evaluating the own-user and SIC rates costs at most $\mathcal{O}(K^3)$. In a second-order implementation, assembling the nonlinear Hessian terms requires $\mathcal{O}(C_{\rm nl}n_z^2)$ operations, while the dense Newton factorization costs $\mathcal{O}(n_{\rm IP}^3)$. Hence, for $I_{\rm IP}$ Newton iterations,
\begin{equation}
\begin{aligned}
\mathcal{C}_{\rm JO} &= \mathcal{O} \left(I_{\rm IP}\! \left[MNK+MK^2+K^3+C_{\rm nl}n_z^2+n_{\rm IP}^3 \right] \right) \\
& \simeq \mathcal{O} \left(I_{\rm IP}n_{\rm IP}^3 \right).
\end{aligned}
\label{eq:CJO_compact}
\end{equation}

For the real-valued precoding refinement, we exploit the dual-\gls{SDP}/eigenvector characterization instead of a black-box primal \gls{SOCP} implementation. For a target \gls{SINR} $\gamma$, the number of decoding constraints, and hence the number of dual variables $\{\mu_{m,k}\}_{m\ge k}$, is $C_{\rm dec}=\frac{K(K+1)}{2}$.

The dual \gls{SDP} has $C_{\rm dec}$ scalar variables and $K$ semidefinite constraints $\mathbf Q_k(\boldsymbol\mu)\succeq\mathbf0$ of size $M\times M$. Thus, one SDP interior-point iteration costs
\begin{equation}
\mathcal{O}\!\left(C_{\rm dec}^2KM^2+C_{\rm dec}KM^3+C_{\rm dec}^3\right),
\end{equation}
and obtaining $\boldsymbol\mu^\star$ requires
\begin{equation}
\mathcal{C}_{\rm dual}
=\mathcal{O}\!\left(I_{\rm SDP}
\left[C_{\rm dec}^2KM^2+C_{\rm dec}KM^3+C_{\rm dec}^3\right]\right).
\end{equation}
Afterwards, constructing all $\mathbf Q_k(\boldsymbol\mu^\star)$ costs $\mathcal{O}(KC_{\rm dec}M^2)$ if $\{\mathbf h_m\mathbf h_m^T\}$ are precomputed, extracting the $K$ smallest-eigenvalue eigenvectors costs $\mathcal{O}(KM^3)$, and computing $g_{m,k}=(\mathbf h_m^T\mathbf v_k)^2$ followed by the backward power recursion costs $\mathcal{O}(K^2M)$. With $I_{\rm bis} =
\left\lceil \log_2\!\left(\frac{\gamma_{\max}-\gamma_{\min}}{\epsilon_\gamma}\right) \right\rceil$, the real-valued precoding complexity is
\begin{multline}
\mathcal{C}_{\rm RP}
=\mathcal{O}\Big(I_{\rm bis}\big[
I_{\rm SDP}(C_{\rm dec}^2KM^2+C_{\rm dec}KM^3+C_{\rm dec}^3)\\
+KC_{\rm dec}M^2+KM^3+K^2M
\big]\Big).
\label{eq:CRP_compact}
\end{multline}
Compared with a generic primal \gls{SOCP} interior-point step, whose Newton system couples all $MK$ precoding variables and $C_{\rm dec}$ cone constraints, the proposed dual-\gls{SDP} route is more structured: it solves a lower-dimensional dual problem with $C_{\rm dec}$ scalar variables and $K$ small $M\times M$ postive semi-definite blocks, then recovers the primal solution through $K$ eigenvalue decompositions and a low-cost backward recursion. Therefore, for the typical \gls{PAS}-\gls{NOMA} regime with small-to-moderate $K$ and moderate $M$, the dual-\gls{SDP}/eigenvector/recursion implementation has lower practical complexity than a black-box \gls{SOCP} step.


\section{Stage 2: Joint fine tuning of PA placement and transmit precoding optimization} \label{fine-tune}

The solution of the coarse optimization stage represents the performance upper bound of the system, since we ignore the phase shifts of the channels assuming perfect in-phase signals combining. Therefore, this section introduces the second stage of the proposed optimization framework, namely the fine-tuning stage, which refines the solution of the first stage to find a feasible and near-optimal solution. In contrast to the coarse stage where only channel magnitudes are considered, the fine-tuning stage restores the full complex channels and refines the solution through two successive sub-stages: i)~a sequential phase-zeroing procedure over all \glspl{PA} and \glspl{WG}, and ii)~an alternating refinement between forward/backward \gls{PA} placement and complex transmit precoding updates. The fine-tuning stage therefore acts directly on the coherent sums in \eqref{eq:Sjk_multi}, rather than on their magnitude-only approximations.

\subsection{Phase-Zeroing Sub-Stage}
The coarse solution provides the optimized PA locations, $\mathbf{X}^{\mathrm{Joint}}$, that are robust in terms of path loss, but it does not explicitly align the propagation phases. The purpose of phase zeroing is therefore to perturb each PA location toward a local position where the phase of the corresponding complex channel is closer to zero modulo $2\pi$, thereby reducing destructive coherent combination before the alternating refinement begins. Starting from the coarse-stage positions, the $n$-th \gls{PA} on \gls{WG} $m$ is searched over a local interval while all other \glspl{PA} remain fixed. For a candidate location $x$, the phase-zeroing metric is defined as
\begin{equation}
\mathcal{M}_{m,n}(x_{m,n}) = \sum\nolimits_{k=1}^{K}\operatorname{wrap}_{[-\pi,\pi)}^2\!\bigl(\angle h_k(x_{m,n})\bigr),  \label{eq:phase_metric_multi}
\end{equation}
where $h_k(x_{m,n})$ is the complex channel from \gls{PA} $n$ on waveguide $m$ to user $k$ defined in (\ref{eq:htilde}). We then select the candidate $x_{m,n}$ that minimizes \eqref{eq:phase_metric_multi}.

The search interval is directional and keeps a protection margin between adjacent \glspl{PA}. Let $\Delta_x=\lambda/100$ be the search step, $\Delta=t_\lambda\lambda$ the search span, and $\delta$ the spacing margin, where $\delta$ is the minimum guard spacing between adjacent \glspl{PA} required to mitigate coupling effects, while $t_{\lambda}$ represents the wavelength-normalized search interval assigned to each \gls{PA}. To ensure that the subsequent fine-tuning stage retains adequate adjustment freedom, the guard parameter $q$ in \eqref{eq:orig_multi_spacing} must be selected such that $q \geq t_{\lambda}\lambda + \delta$. Thus, in the forward phase-zeroing pass, the admissible candidate set of \gls{PA} $(m,n)$ is  \vspace{-0.05 in}
\begin{equation}  \label{pz_interval}
\mathcal{X}_{m,n}^{\mathrm{pz}}=
\begin{cases}
[x_{m,n}, \, \min(x_{m,n} {+} \Delta, \, x_{m,n+1} {-} \delta)],  & n<N,  \\
[x_{m,N}, \, \min(x_{m,N} {+} \Delta, \, L)], & n=N,
\end{cases}
\end{equation}
which is sampled with step $\Delta_x$. Hence, the update rule is \vspace{-0.1 in}
\begin{equation}
x_{m,n}^{\star}=\arg\min_{x\in\mathcal{X}_{m,n}^{\mathrm{pz}}}\mathcal{M}_{m,n}(x). \label{eq:pz_update}
\end{equation} 
After each local update, the current minimum rate may be re-evaluated using the full complex channels in order to monitor the quality of the phase-zeroed geometry.

The phase-zeroing sub-stage is intentionally simple. It does not solve the original max-min problem directly; rather, it provides a phase-aware initialization for the subsequent alternating optimization. This is important because the coherent sums in \eqref{eq:Sjk_multi} are highly oscillatory with respect to small changes in $x_{m,n}$, and a direct high-dimensional optimization over the full complex model is prone to returning poor local minima.


\subsection{Alternating Forward/Backward PA Placement and Transmit Precoding}
Once the phase-zeroed geometry is obtained, the algorithm alternates between two coupled blocks: i)~sequential \gls{PA}-position refinement under fixed precoding, and ii)~transmit precoding optimization under fixed \gls{PA} positions. The first block is implemented by forward and backward coordinate-search sweeps over all \glspl{WG}, while the second block is solved through a bisection-assisted \gls{SCA} procedure.

\subsubsection{Forward PA Placement}
In the forward sweep, the algorithm scans each \gls{WG} from the first \gls{PA} to the last \gls{PA}. For fixed precoding matrix $\mathbf{W}$ and fixed locations of all other \glspl{PA}, the position of \gls{PA} $(m,n)$ is updated by maximizing the  worst-user rate over its admissible forward interval, namely
\begin{equation}
x_{m,n}^{\star} = \arg\max_{x \in \mathcal{X}_{m,n}^{\mathrm f}} \; \min_{k \in \{1, \dots, K \}} R_k\bigl(x; \, \mathbf{X}_{-(m,n)}, \mathbf{W} \bigr),  \label{eq:forward_update_multi}
\end{equation}
where $\mathbf{X}_{-(m,n)}$ denotes all \gls{PA} locations except $x_{m,n}$, while $\mathcal{X}_{m,n}^{\mathrm f}$ is the feasible search set of the forward \gls{PA} placement step and it is defined exactly as the expression in \eqref{pz_interval}.

For each candidate point, the complex rates induced by the current coherent sums are computed and the point that maximizes the minimum own-user rate is selected. Since only one coordinate is adjusted at a time, the forward sweep is a deterministic coordinate-search procedure over discretized local intervals.

\subsubsection{Backward PA Placement}
The backward sweep follows the same principle, but scans each \gls{WG} from the last \gls{PA} to the first \gls{PA}. For fixed $\mathbf{W}$ and all other \gls{PA} positions, PA $(m,n)$ is updated as
\begin{equation}
x_{m,n}^{\star} = \arg\max_{x \in \mathcal{X}_{m,n}^{\mathrm b}} \; \min_{k \in \{1, \dots, K\}} R_k \bigl(x; \, \mathbf{X}_{-(m,n)},\mathbf{W} \bigr),
    \label{eq:backward_update_multi}
\end{equation}
with feasible interval
\begin{equation}
\mathcal{X}_{m,n}^{\mathrm b} =
\begin{cases}
[\max(x_{m,n} {-} \Delta, \, x_{m,n-1} {+} \delta), \, x_{m,n}],  & n>1,  \\
[\max(x_{m,1} {-} \Delta, \, 0), \, x_{m,1}],  & n=1.
\end{cases}
\end{equation}
The use of both forward and backward sweeps is crucial. Since the feasible interval of each \gls{PA} depends on already updated neighboring \glspl{PA}, a one-directional sweep may introduce a strong directional bias and lock the geometry prematurely. The reverse sweep compensates for this effect by reopening local search opportunities from the opposite side, which empirically improves the final max-min fairness rate.

In both forward and backward sweeps, the objective is evaluated using the full complex rates. Thus, the algorithm refines the geometry directly with respect to coherent beam combination and \gls{SIC} performance, rather than merely with respect to channel magnitudes.


\subsubsection{Transmit Precoding Update via Bisection and SCA}
After a number of forward/backward placement sweeps, the code updates the transmit precoding matrix under fixed PA locations. At the current geometry $\mathbf{X}$, define the aggregate channel matrix
\begin{equation}
    \mathbf{H}(\mathbf{X})=[\mathbf{h}_1,\dots,\mathbf{h}_K]\in\mathbb{C}^{M\times K},
\end{equation}
where the $k$-th column is the WG-wise aggregate channel vector
\begin{equation}
    \mathbf{h}_k=[h_{1k}(\mathbf{x}_1),\dots,h_{Mk}(\mathbf{x}_M)]^T.
\end{equation}
For a target common \gls{SINR} level $\gamma$, the transmit-precoding subproblem seeks the minimum power beamforming matrix that satisfies all NOMA decoding constraints as
\begin{subequations}
\begin{align}
\min_{\mathbf{W}}\quad & \|\mathbf{W}\|_F^2   \label{eq:sca_power_min}  \\
\text{s.t.} \quad & |\mathbf{h}_k^T \mathbf{w}_j|^2 \ge \gamma \Bigl( \sum\nolimits_{\ell=j+1}^{K} |\mathbf{h}_k^T\mathbf{w}_\ell|^2 + \sigma_k^2\Bigr), \nonumber  \\
& \forall j = 1, \dots, K, k = j, \dots, K,  \label{eq:sca_sinr_con}
\end{align}
\end{subequations}
where $\mathbf{w}_j$ is the $j$-th column of $\mathbf{W}$. Constraint \eqref{eq:sca_sinr_con} is the fixed-threshold version of the SIC inequalities and must hold for each message $j$ at each user $k\ge j$.

Problem \eqref{eq:sca_power_min}-\eqref{eq:sca_sinr_con} is still non-convex because the desired-signal terms $|\mathbf{h}_k^T\mathbf{w}_j|^2$ are convex quadratic functions appearing on the left-hand side of a superlevel constraint. The code handles this using SCA. Suppose that at SCA iteration $r$ the current point is $\mathbf{W}^{(r)}$. Let
\begin{equation}
    c_{kj}^{(r)}=\mathbf{h}_k^T\mathbf{w}_j^{(r)}.
\end{equation}
Then, the first-order affine lower bound of $|\mathbf{h}_k^T\mathbf{w}_j|^2$ around $\mathbf{w}_j^{(r)}$ is
\begin{equation}
    |\mathbf{h}_k^T\mathbf{w}_j|^2\ge 2\Re\!\left\{\bigl(c_{kj}^{(r)}\bigr)^*\mathbf{h}_k^T\mathbf{w}_j\right\}-|c_{kj}^{(r)}|^2.
    \label{eq:affine_lb}
\end{equation}
Replacing each desired-signal term in \eqref{eq:sca_sinr_con} by \eqref{eq:affine_lb} yields the convex \gls{SCA} subproblem
\begin{subequations}
\begin{align}
\min_{\mathbf{W}}\quad & \|\mathbf{W}\|_F^2 \\
\text{s.t.}\quad & 2\Re\!\left\{\bigl(c_{kj}^{(r)}\bigr)^*\mathbf{h}_k^T\mathbf{w}_j\right\}-|c_{kj}^{(r)}|^2  \nonumber \\
& \ \ \ge \gamma \! \left(\! \sum\nolimits_{\ell=j+1}^{K} |\mathbf{h}_k^T \mathbf{w}_\ell|^2 + \sigma_k^2 \!\right) \!, \quad \forall j,\; k\ge j.   \label{eq:sca_convex_subproblem}
\end{align}
\end{subequations}
This subproblem is convex because the left-hand side is affine in $\mathbf{W}$, whereas the interference terms on the right-hand side remain convex quadratic expressions.

The SCA procedure requires an initial feasible point. The code constructs it by fixing the normalized beam directions \vspace{-0.05 in}
\begin{equation}
\mathbf{u}_j = \frac{\mathbf{h}_j}{\|\mathbf{h}_j\|},\qquad j=1,\dots,K,
\end{equation}
and searching only over the scalar powers $p_j\ge 0$, so that $\mathbf{w}_j=\sqrt{p_j}\,\mathbf{u}_j$. Substituting this into the fixed-\gls{SINR} inequalities leads to a simple \gls{LP} power-allocation problem as \vspace{-0.05 in}
\begin{subequations}
\begin{align}
\min_{\{p_j\ge 0\}}\quad & \sum\nolimits_{j=1}^{K}p_j  \\
\text{s.t.} \quad \ & A_{kj}p_j \!\ge\! \gamma\! \left( \!\sum\nolimits_{\ell=j+1}^{K}A_{k\ell}p_\ell + \sigma_k^2 \! \right) \!\!,
\quad \forall j,\; k\ge j,
\end{align}  
\end{subequations}
where $A_{kj} = |\mathbf{h}_k^T\mathbf{u}_j|^2$. If this initialization is feasible, the \gls{SCA} iterations proceed until the relative change of the required power becomes smaller than a prescribed tolerance.

\subsubsection{Outer Bisection on the Common SINR Target}
The SCA block solves the minimum-power problem for a \emph{fixed} common SINR target $\gamma$. To maximize the minimum user rate,  an outer bisection search over $\gamma$ is performed. Let $\gamma_{\min}$ and $\gamma_{\max}$ denote the lower and upper bounds on the feasible common SINR. For each trial value $\gamma^{\mathrm{try}}=\frac{\gamma_{\min}+\gamma_{\max}}{2}$, the inner \gls{SCA} problem returns the minimum required power $P_{\mathrm{req}}(\gamma^{\mathrm{try}})$. If $P_{\mathrm{req}}\le P_T$, then the target \gls{SINR} is feasible and the lower bisection bound is increased; otherwise, the upper bound is decreased. The procedure terminates when the \gls{SINR} interval becomes sufficiently small or when the required power approaches the available power budget. The final common minimum rate is $R_{\mathrm{opt}} = \log_2( 1 + \gamma_{\mathrm{opt}})$.

Therefore, the transmit-precoding update consists of three nested layers: i)~outer bisection over the common \gls{SINR} target, ii)~inner power minimization for a fixed target, and iii)~ \gls{SCA} iterations for convexifying the non-convex desired-signal terms.

\begin{algorithm}[t]
\footnotesize
\caption{Fine-Tuning Stage for Multi-Waveguide PAS-NOMA}
\label{alg:fine_tuning_multi}
\begin{algorithmic}[1]
\Require Coarse-stage PA positions $\mathbf{X}^{\mathrm{Joint}}$ and precoder $\mathbf{W}^{\mathrm c}$; search parameters $t_\lambda$, $\delta$, $\Delta_x=\lambda/100$; tolerances for bisection and SCA
\Ensure Refined PA positions $\mathbf{X}^{\star}$ and precoder $\mathbf{W}^{\star}$
\State Initialize $\mathbf{X}\leftarrow \mathbf{X}^{\mathrm{Joint}}$ and $\mathbf{W}\leftarrow \mathbf{W}^{\mathrm c}$
\For{each WG $m=1,\dots,M$}
    \For{each PA $n=1,\dots,N$}
        \State Build candidate set $\mathcal{X}_{mn}^{\mathrm{pz}}$ and update $x_{m,n}$ with \eqref{eq:pz_update};
    \EndFor
\EndFor
\Repeat
    \For{a prescribed number of local geometry sweeps}
        \For{each WG $m$, PA $n$ in forward order}
            \State Update $x_{m,n}$ by solving \eqref{eq:forward_update_multi};
        \EndFor
        
        \For{each WG $m$, PA $n$ in backward order}
            \State Update $x_{m,n}$ by solving \eqref{eq:backward_update_multi};
        \EndFor
    \EndFor
    
    \State Form the aggregate channel matrix $\mathbf{H}(\mathbf{X})$;
    \State Initialize feasible beam directions/powers;
    \State Set bisection bounds on $\gamma$;
    
    \Repeat
        \State Set trial SINR target $\gamma^{\mathrm{try}}$;
        \State Solve the fixed-$\gamma^{\mathrm{try}}$ power minimization by SCA;
        
        \Repeat
            \State Linearize desired-signal terms using \eqref{eq:affine_lb};
            \State Solve the resulting convex subproblem \eqref{eq:sca_convex_subproblem};
            \State Update the linearization point;
        \Until{required power converges/ SCA limit reached}
        
        \State Update the bisection interval according to whether
        \State the required power is below $P_T$ or above;
    \Until{the bisection interval is sufficiently small}
    
    \State Set $\mathbf{W}$ to the final precoder returned by bisection-
    \State SCA;
    
\Until{the minimum own-user rate converges or the maximum number of outer iterations is reached}
\State \Return $\mathbf{X}^{\star}=\mathbf{X}$ and $\mathbf{W}^{\star}=\mathbf{W}$
\end{algorithmic}
\end{algorithm}


\vspace{-0.1 in}
\subsection{Computational Complexity of the Fine-tuning Stage}
\label{subsec:complexity_finetuning_multiwg}

We now characterize the computational complexity of the proposed fine-tuning stage. Let $Q_x = \left\lceil \frac{t_\lambda\lambda}{\Delta_x} \right\rceil + 1$ be the maximum number of grid points examined in each local \gls{PA} search interval, where $\Delta_x$ is the spatial sampling step.

First, in the phase-zeroing sub-stage, each of the $MN$ \glspl{PA} is updated once by searching over at most $Q_x$ candidate locations. For each candidate point, the phase-zeroing metric in \eqref{eq:phase_metric_multi} requires evaluating the complex channel from the considered \gls{PA} to all $K$ users and computing the corresponding wrapped phase errors. Hence, the complexity of the phase-zeroing sub-stage is $\mathcal{C}_{\rm PZ} = \mathcal{O}\!\left( MNQ_xK \right)$.

Next, consider the forward/backward \gls{PA}-placement refinement. In each forward or backward sweep, all $MN$ \glspl{PA} are sequentially visited. For each \gls{PA}, at most $Q_x$ candidate locations are tested, and the candidate that maximizes the current worst-user rate is selected. For a given candidate point, the update of the affected PA-to-user channels costs $\mathcal{O}(K)$. Then, updating the coherent terms $\{S_{jk}\}$ in \eqref{eq:Sjk_multi} under the fixed precoding matrix costs $\mathcal{O}(K^2)$, while evaluating all own-user and \gls{SIC} rates costs at most $\mathcal{O}(K^3)$. Therefore, the complexity of one complete forward/backward pair of \gls{PA}-placement sweeps is
\begin{equation}
\mathcal{C}_{\rm FB} = \mathcal{O}\!\left(2MNQ_x\left(K+K^2+K^3\right)\right) \simeq \mathcal{O}\!\left(MNQ_xK^3\right).  \label{eq:CFB_multi}
\end{equation}
If $I_{\rm sw}$ forward/backward sweep pairs are performed before each precoding update, the total \gls{PA}-placement refinement cost per fine-tuning outer iteration becomes $\mathcal{C}_{\rm PA} = \mathcal{O} \! \left( I_{\rm sw} MN Q_x K^3 \right)$. This expression corresponds to an incremental implementation in which only the affected coherent sums are updated after moving one \gls{PA}.

The transmit-precoding update is performed for fixed \gls{PA} locations through an outer bisection search over the common \gls{SINR} target and an inner \gls{SCA}-based power minimization. The number of bisection iterations is $I_{\rm bis}^{\rm ft} = \left\lceil \log_2\!\left( \frac{\gamma_{\max}-\gamma_{\min}}{\epsilon_\gamma} \right) \right\rceil$, where $\epsilon_\gamma$ is the bisection accuracy. For each trial value of $\gamma$, a feasible initialization is obtained by solving an LP over the $K$ scalar powers subject to $C_{\rm dec}$ linear inequalities. Using a generic interior-point implementation, this initialization costs $\mathcal{C}_{\rm LP} = \mathcal{O}\!\left(I_{\rm LP}(K + C_{\rm dec})^3\right)$, where $I_{\rm LP}$ is the number of \gls{LP} interior-point iterations and $C_{\rm dec}$ is the number of \gls{NOMA} decoding constraints. This term is typically small compared with the subsequent complex precoding \gls{SCA} step.

For the \gls{SCA} step, the optimization variable is the complex matrix $\mathbf{W} \in \mathbb{C}^{M\times K}$, which is equivalent to $2MK$ real scalar variables. The convexified subproblem in \eqref{eq:sca_convex_subproblem} contains $C_{\rm dec}$ convex quadratic constraints. Let $n_{\rm SCA}=2MK+C_{\rm dec}$ denote the effective dense Newton-system dimension of the convex \gls{SCA} subproblem after introducing the required conic/slack variables. At each \gls{SCA} iteration, forming the affine lower bounds in \eqref{eq:affine_lb} and the interference quadratic terms costs at most $\mathcal{O}(C_{\rm dec}KM)$, while solving the resulting dense convex subproblem by an interior-point method costs $\mathcal{O}(n_{\rm SCA}^3)$. Hence, if $I_{\rm SCA}$ \gls{SCA} iterations are used for each bisection trial, the complexity of the precoding update is
\begin{equation}
\begin{aligned}
\mathcal{C}_{\rm W} & = \mathcal{O} \! \left( I_{\rm bis}^{\rm ft} \left[ \mathcal{C}_{\rm LP} + I_{\rm SCA} \left( C_{\rm dec}KM+n_{\rm SCA}^3 \right) \right] \right)  \\
& \simeq \mathcal{O} \! \left( I_{\rm bis}^{\rm ft} I_{\rm SCA} n_{\rm SCA}^3 \right).
\end{aligned}
\label{eq:CW_ft}
\end{equation}
The approximation in \eqref{eq:CW_ft} follows because the dense Newton factorization normally dominates the cost of forming the linearized constraints.

Finally, the total complexity of all the steps of the fine-tuning stage scales as
\begin{equation}  \label{eq:CFT_multi}
\mathcal{C}_{\rm ft} \simeq \mathcal{O}\!\left( I_{\rm ft} \left[ I_{\rm sw}MNQ_xK^3 + I_{\rm bis}^{\rm ft}I_{\rm SCA}n_{\rm SCA}^3 \right] \right),
\end{equation}
where $I_{\rm ft}$ denotes the number of outer fine-tuning iterations. Equation \eqref{eq:CFT_multi} shows that the fine-tuning cost is composed of two dominant terms. The first term scales linearly with the number of \glspl{PA}, $MN$, and corresponds to the local coordinate-search refinement of the \gls{PA} positions. The second term corresponds to the complex active-precoding update and is mainly governed by the dense interior-point solution of the convexified \gls{SCA} subproblems.


\vspace{-0.05 in}

\section{Simulation Results} \label{Sim}
In this section, we verify the proposed optimization scheme through simulations, compare its performance to a range of benchmarks, and investigate the impact of a range of system parameters. For all results shown, we assume that between two and four \glspl{UE} are served. Where not otherwise specified, the \gls{UE} locations are assumed to be $\mathbf{u}_1=(3,-1,0)$, $\mathbf{u}_2=(10,2,0)$, $\mathbf{u}_3=(18,3,0)$ and $\mathbf{u}_4=(24,4,0)$. All \glspl{UE} are served by a \gls{PAS} system involving $M=2$ waveguides at a height of $d=3$ m. Where not otherwise specified, each wavelength is assumed to be of length $L=30$ m. The attenuation factor and effective refractive index of the dielectric are $\kappa=0.1$ dB/m and $\eta_\mathrm{eff}=1.4$, respectively \cite{ding_flexible_2025}. The carrier frequency is $f_c=28$ GHz, and the noise variance is $\sigma^2=-80$ dBm.

The performance of the proposed scheme is evaluated against three benchmarks:
\begin{itemize}

\item \textbf{Perfect combining \gls{UB}}: The \gls{UB} benchmark is the result obtained after the first stage, i.e., the coarse optimization stage. It assumes perfect phase alignment across all channels so that the aggregate channel reduces to a sum of magnitudes rather than a sum of complex coefficients.

\item \textbf{\gls{HO}}: The only other known work to consider a \gls{PAS}-\gls{NOMA} system involving multiple \glspl{PA} across multiple \glspl{WG} employs \gls{PSO}, that is a \gls{HO} approach \cite{gan_joint_2025}. Among the many \gls{HO} methods we evaluated, a pattern-search algorithm yielded the best performance for the parameters used in these results and is therefore used as a benchmark. For the pattern search, $20$ random initial values are generated, and the highest final result is selected.

\item \textbf{Hybrid beamforming \gls{MIMO}-\gls{NOMA}}: We compare the performance of the proposed scheme against the hybrid analog-digital \gls{MIMO}-\gls{NOMA} system to quantify the gains provided by \glspl{PAS}. For a fair comparison, we assume the \gls{MIMO}-\gls{NOMA} system consists of a \gls{ULA} of $MN$ antennas equally spaced at intervals of $\lambda/2$ and $M=2$ \gls{RF} chains each is only connected to a subset of $N$ antennas of the \gls{ULA}. Hybrid beamforming is used, with analog beamforming over the $N$ antennas connected to each \gls{RF} chain followed by digital beamforming across the two \gls{RF} chains.

\end{itemize}
Throughout the whole simulations, the number of multi-start initializations for our proposed \gls{IPA} for the coarse optimization is $4$, while number of random initializations for the \gls{HO} benchmark is $20$.

\vspace{-0.1 in}
\subsection{Convergence Behavior}

\begin{figure*}
\centering
\includegraphics[width=1.8\columnwidth]{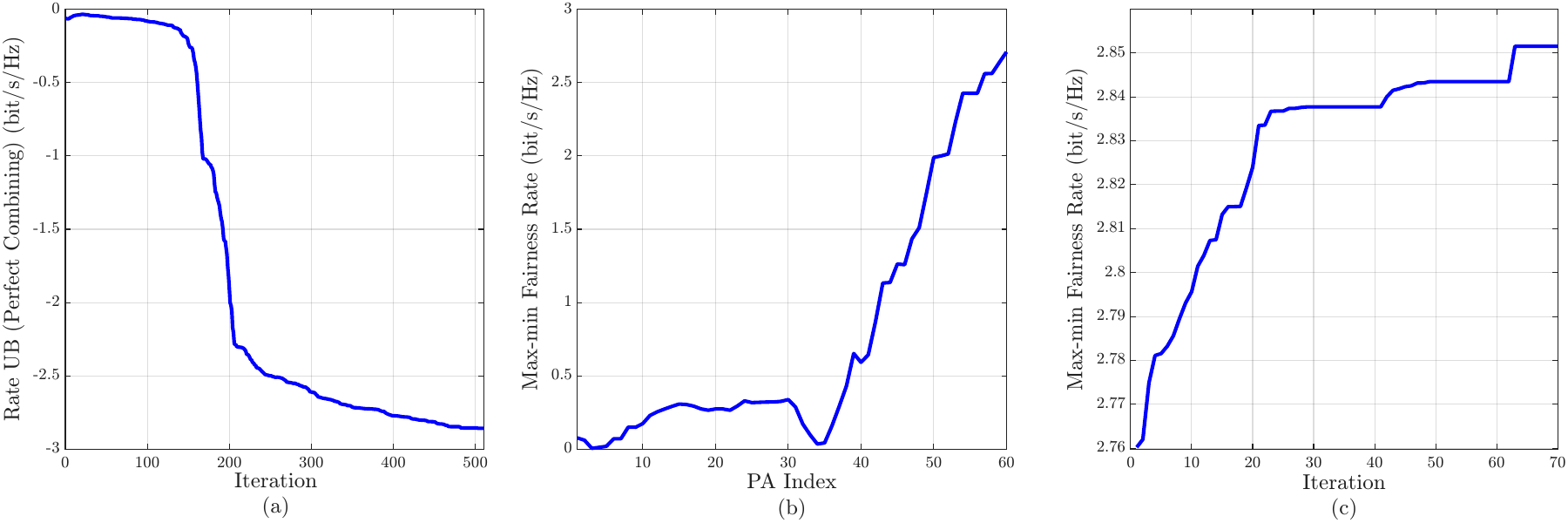}
\vspace{-0.1 in}
\caption{Example convergence behavior for each stage of the proposed optimization scheme: (a) Coarse optimization stage, (b) phase-zeroing sub-stage, (c) alternating forward/backward PA placement and transmit precoding sub-stage.}
\vspace{-0.2 in}
\label{fig:conv}
\end{figure*}

Figure \ref{fig:conv} provides an example of the convergence behavior for each stage of the proposed \gls{JO} algorithm. Figure \ref{fig:conv}(a) shows the perfect-combining \gls{UB} returned at each iteration of the coarse optimization stage, and Fig. \ref{fig:conv}(b) shows the max-min fairness rate as each \gls{PA} is individually repositioned during the phase-zeroing stage. Fig. \ref{fig:conv}(c) shows the max-min fairness rate during the fine-tuning stage, where forward-backward \gls{PA} placement and SCA-based precoder optimization are performed alternately until convergence.

From Fig. \ref{fig:conv}(a), it can be seen that the perfect combining \gls{UB} cost function initially decreases slowly. The interior point method applied takes cautious initial steps while the barrier parameter $\mu$ is still large, ensuring that iterates remain away from the constraint boundaries. Once $\mu$ is reduced sufficiently, the steps become larger and the value drops rapidly towards the optimal. Once $\mu$ is reduced sufficiently, the cost function begins to plateau again as convergence to the \gls{KKT} point occurs. Convergence here is deemed to have occurred when the norm of the gradient function is less than $10^{-7}$.

Figure \ref{fig:conv}(b) illustrates the transition to the true max-min rate objective. Although the \glspl{PA} are initialized at the optimal coarse-stage positions, the max-min rate is initially very low, as the neglected channel phases may combine destructively. The phase-zeroing stage then sequentially repositions each \gls{PA} to align its channel phase toward zero. Performance improves slowly at first, while the majority of \glspl{PA} remain misaligned, before rising rapidly once sufficient phase alignment is achieved across the array.

Figure \ref{fig:conv}(c) shows that the fine-tuning stage begins from the solution provided by the phase-zeroing stage. Forward-backward \gls{PA} position adjustment is first performed until the improvement falls below a prescribed threshold, at which point the \gls{SCA}-based precoder optimization is applied. The updated precoder then creates new opportunities for further positional refinement, and the two steps alternate until convergence. This interleaving produces the step-like behavior observed in Fig. \ref{fig:conv}(c), with each step corresponding to a precoder update followed by a subsequent round of position refinement. Convergence is declared when forward-backward adjustment yields no further improvement even after precoder optimization.

\vspace{-0.1 in}
\subsection{The Impact of the Number of Pinching Antennas}

Figure \ref{fig:N} investigates the impact of the number of \glspl{PA} on the max-min fairness rate obtained by \glspl{UE}, and compares the performance of the proposed optimization approach with the detailed benchmarks. Here, $P_T=3$ dBm. 

\begin{figure}[t]
\centering
\includegraphics[width=\columnwidth]{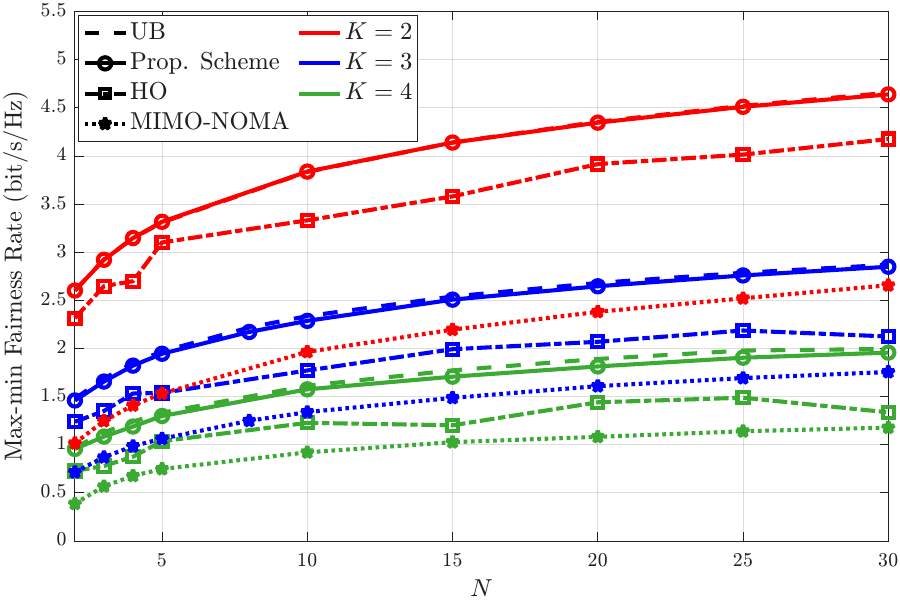}
\vspace{-0.2 in}
\caption{Performance comparison of the proposed scheme against existing benchmarks versus the number of \glspl{PA}.}
\label{fig:N}
\end{figure}

The proposed \gls{JO} method achieves max-min rate performance nearly indistinguishable from the perfect combining \gls{UB}, particularly for low $K$. This is attributed to the reduced number of phase alignment constraints, as serving fewer \glspl{UE} requires the phases of fewer channels to be simultaneously aligned, enabling near-perfect coherent combining. Additionally, it can be seen that the max-min rate curve for the proposed \gls{JO} approach grows smoothly, in contrast to the irregular growth observed for the \gls{HO} method. The guided nature of the proposed approach significantly reduces its reliance on initial position. The \gls{HO} method is far more reliant on the initial position, resulting in frequent convergence to local optima and the irregular behavior seen in Fig. \ref{fig:N}. All the three considered \gls{PAS} schemes significantly outperform the \gls{MIMO} benchmark, even under the favorable assumption of perfect \gls{LoS} from the origin-located \gls{ULA} to all \glspl{UE}. This gain is attributable to the ability of \glspl{PA} to be placed in close proximity to each \gls{UE}, yielding substantial path loss reductions that a fixed-array MIMO system cannot replicate.

As $N$ increases, \glspl{UE} avail of the additional degrees of freedom that increased numbers of radiating elements provide. For \glspl{PAS}, the additional \glspl{PA} can be positioned near to \glspl{UE}, reducing path losses and increasing the aggregate channel magnitude. For MIMO systems, an increase in the number of antennas increases the beamforming gain. In both scenarios, this drives an approximately logarithmic increase in max-min rate. When $K$ is higher, the total transmit power is shared between more \glspl{UE}, and less \glspl{PA} can be located in close proximity to each \gls{UE}, resulting in a drop in the max-min rate. Additionally, the max-min rate increases more slowly with $N$ for higher $K$. In this case, higher inter-user interference is a limiting factor and reduces the \gls{SINR}. The effectiveness of \gls{SIC} is also reduced. Together, these factors reduce the marginal improvements offered by additional antenna elements.

\vspace{-0.05 in}
\subsection{The Impact of the Transmit Power}

\begin{figure}
\centering
\includegraphics[width=\columnwidth]{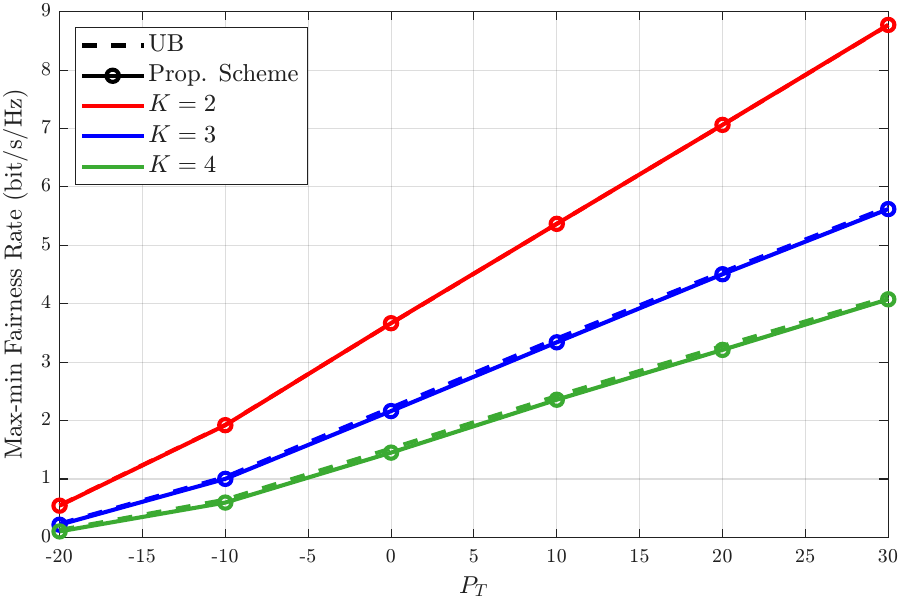}
\vspace{-0.2 in}
\caption{Performance comparison of the proposed scheme and the upper bound versus $P_T$ for different $K$ values.}
\label{fig:PT}
\end{figure}

Figure \ref{fig:PT} considers the effects of transmit power on the max-min rate of \glspl{PAS} serving 2,3 and 4 \glspl{UE} when $N=16$. Since the performance of the proposed algorithm was compared to a range of benchmarks in Fig. \ref{fig:N}, only the proposed method and the perfect combining \gls{UB} are considered in Fig. \ref{fig:PT}.

It can be seen from Fig. \ref{fig:PT} that an increase in transmit power results in a monotonic increase in the max-min rate. As in Fig. \ref{fig:N}, the proposed \gls{JO} approach achieves performance closest to the \gls{UB} when serving a lower number of \glspl{UE} due to the reduced phase alignment requirements. Additionally, the max-min rate grows more steeply with $P_T$ as $K$ decreases. As the transmit power increases, the system becomes interference-limited rather than noise-limited. Systems serving more \glspl{UE} suffer from more inter-user interference and thus less effective \gls{SIC} decoding. As $P_T$ increases, this interference grows alongside the signal power. Therefore, \gls{SINR} of the weakest \gls{UE}, which determines the max-min rate, scales less favorably with  $P_T$ when $K$ is high. This widens the observed gaps between the curves corresponding to different numbers of \glspl{UE}.

\vspace{-0.1 in}
\subsection{The Impact of the Spatial Range of Users}

\begin{figure}
\centering
\includegraphics[width=\columnwidth]{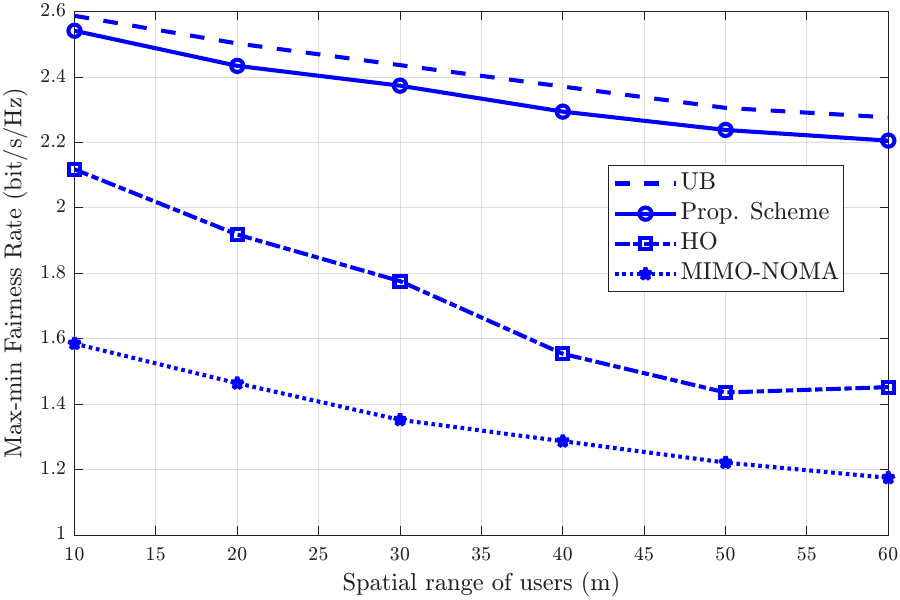}
\vspace{-0.2 in}
\caption{Achievable max-min fairness rate comparison of the proposed scheme and existing benchmarks versus different distance ranges of the users.}
\label{fig:spatial}
\end{figure}

Figure \ref{fig:spatial} considers the effects of \gls{UE} position on the max-min fairness rate. For these results, $N=16$, $P_T=3$ dBm and the positions of $K=3$ \glspl{UE} are varied. It is assumed that \gls{UE} 1 remains at $\mathbf{u}_1=(3,-1,0)$ and the $x$-position of \glspl{UE} 2 and 3 are varied, making their locations $\mathbf{u}_2=(x_2,2,0)$ and $\mathbf{u}_3=(x_3,3,0)$. The $x$ distance between \glspl{UE} 1 and 3 provides the spatial range seen on the $x$-axis, and $x_2=\frac{x_3-3}{2}$. The waveguides are assumed to be $L=x_3$ m long.

It can be seen from Fig. \ref{fig:spatial} that the max-min fairness rate decreases for all schemes when the \glspl{UE} are distributed over a larger area. While \glspl{PA} are able to move to positions near to a \gls{UE}, all users are simultaneously served by all \glspl{PA}. Therefore, as the distance between \glspl{UE} increases, the distance between some of the \glspl{PA} and a given \gls{UE} also increases, resulting in higher path losses in some channels, and a lower max-min fairness rate. However, the performance drop across all methods remains comparatively low, as the received signal at each UE is dominated by contributions from nearby PAs, which are largely unaffected by inter-UE separation.

\vspace{-0.1 in}
\subsection{Computational Time Comparison}

\begin{figure}
\centering
\includegraphics[width=\columnwidth]{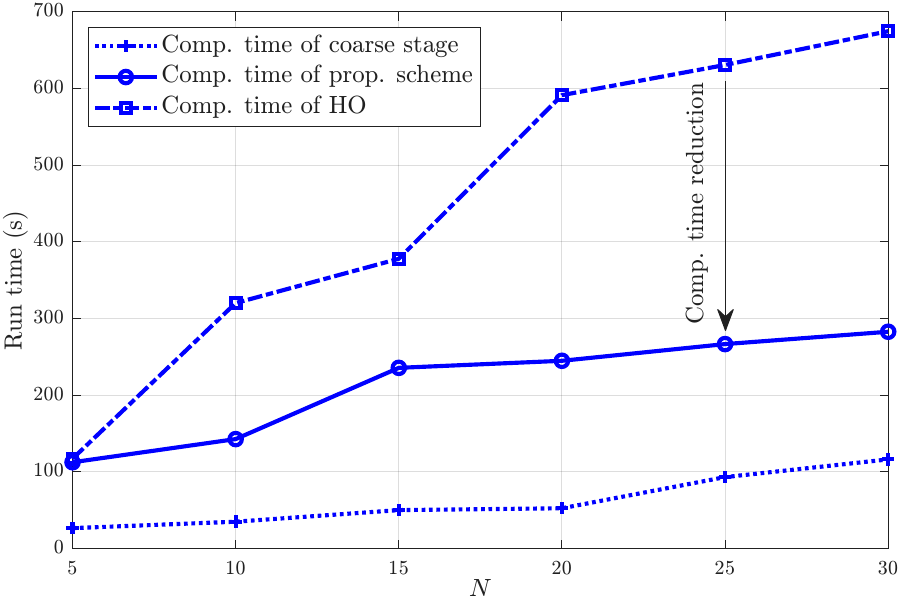}
\vspace{-0.2 in}
\caption{Comparison of Run Time (in seconds) between the Proposed Optimization Scheme and HO Benchmark.}
\label{fig:comp_time}
\end{figure}

Finally, although the number of iterations till convergence in Fig. \ref{fig:conv}(a) looks high, it should be noted that the proposed interior point algorithm for the coarse optimization takes a low convergence time with fast and cheap computations per iteration. Figure \ref{fig:comp_time} shows the computational time of the coarse stage, with $4$ starting points, and the overall time of the proposed algorithm and compares them to the run-time of the \gls{HO} approach with $20$ initial points. The figure shows that the computational run time of the \gls{HO} benchmark is higher than that of the proposed optimization framework, and the gab between both is more pronounced when the number of \glspl{PA}, $N$, increases. Moreover, we shall recall that the achievable performance of our proposed optimization framework is higher than \gls{HO} as discussed in Fig. \ref{fig:N}.

\vspace{-0.05 in}
\section{Conclusions}  \label{Conc}

This paper proposed a max-min rate fairness optimization framework for a \gls{DL} multi-user \gls{PAS}-aided \gls{NOMA} system with multiple \glspl{WG} and multiple \glspl{PA} per \gls{WG}. Since the considered problem is very challenging due to the rapid small-scale fluctuations in the objective function and constraints, a two-stage guided algorithm was developed to jointly optimize the \gls{PA} positions and the transmit precoding matrix. In the first stage, a phase-relaxed coarse optimization procedure, that captures the dominant large-scale geometry of the system, was developed using an \gls{IPA}, exploiting the smooth large-scale behavior of the phase-free channel gains to obtain an effective initialization. In the second stage, the solution was refined through phase alignment and alternating optimization of the \gls{PA} positions and complex transmit precoding to optimize the phase shifts of both \glspl{PA} and complex transmit precoding. The proposed method was shown to outperform the \gls{HO} benchmark while having lower computational time. This is because, unlike population-based metaheuristic approaches, the proposed algorithm leverages the physical structure of the \gls{PAS} channels, avoids repeated random trials and extensive function evaluations, and provides a deterministic and interpretable route to the solution. Simulation results further demonstrated that the proposed \gls{PAS}-aided \gls{NOMA} design achieves substantial gains over a comparable \gls{MIMO} system, mainly due to the proximity gains enabled by flexible \gls{PA} repositioning. The results also showed that the max-min rate improves with both the number of \glspl{PA} and the transmit power, although the improvement becomes less pronounced as the number of \glspl{UE} increases due to stronger inter-user interference.

\section*{Appendix \\ Proof of Lemma~\ref{lem:real_precoding_socp}}
\label{app:proof_real_precoding_lemma}

For real-positive channels, the sign of each precoder can be chosen such that $\mathbf{h}_m^T\mathbf{a}_k\ge 0$ without affecting any received power terms. Hence, the \gls{SINR} constraints in \eqref{eq:real_precoding_power_min_sinr} are equivalently written in the \gls{SOC} form as
\begin{equation}
\mathbf{h}_m^T\mathbf{a}_k
\!\ge\!
\sqrt{\gamma^{\mathrm{Joint}}}
\left\|
\begin{bmatrix}
\mathbf{h}_m^T\mathbf{a}_{k+1}\\
\vdots\\
\mathbf{h}_m^T\mathbf{a}_{K}\\
\sigma_m
\end{bmatrix}
\right\|_2\!\!,
\,\, \forall\, 1\le k\le m\le K,
\label{eq:socp_joint}
\end{equation}
which proves convexity. The convex reformulation of the real precoding power minimization problem can then be given as
\begin{subequations}
\label{eq:real_precoding_soc}
\begin{align}
\min_{\mathbf{A}}\quad & \|\mathbf{A}\|_F^2   \\
\text{s.t.}  \quad  & \eqref{eq:socp_joint}.
\end{align}
\end{subequations}
To characterize the optimal structure, we introduce the quadratic inequalities
\begin{equation}
f_{m,k}(\mathbf{A})\!\triangleq\!
\gamma^{\mathrm{Joint}}\!
\left(\sum_{\ell=k+1}^K\!\!(\mathbf{h}_m^T\mathbf{a}_\ell)^2\!\!+\!\sigma_m^2\!\!\right)
\!-(\mathbf{h}_m^T\mathbf{a}_k)^2
\le 0,
\end{equation}
for all $1\le k\le m\le K$, and associate a nonnegative dual multiplier $\mu_{m,k}$ with each constraint. The Lagrangian becomes
\begin{equation}
\mathcal{L}(\mathbf{A},\boldsymbol{\mu})
=
\sum_{k=1}^K \|\mathbf{a}_k\|_2^2
+\sum_{k=1}^K\sum_{m=k}^K \mu_{m,k} f_{m,k}(\mathbf{A}).
\end{equation}
Using $(\mathbf{h}_m^T\mathbf{a})^2=\mathbf{a}^T\mathbf{h}_m\mathbf{h}_m^T\mathbf{a}$, the Lagrangian separates as
\begin{equation}
\mathcal{L}(\mathbf{A},\boldsymbol{\mu})
=
\sum_{k=1}^K \mathbf{a}_k^T\mathbf{Q}_k(\boldsymbol{\mu})\mathbf{a}_k
+\sum_{k=1}^K\sum_{m=k}^K \mu_{m,k}\gamma^{\mathrm{Joint}}\sigma_m^2,
\end{equation}
where $\mathbf{Q}_k(\boldsymbol{\mu})$ is given in \eqref{eq:Qk_def}. The positive summations in \eqref{eq:Qk_def} represent where $k$ appears as interference in constraints of weaker streams $j<k$, while the negative summations represents where stream $k$ appears as desired signal in constraints $(m,k)$. Define the dual function
\begin{equation}
g(\boldsymbol{\mu})=\inf_{\mathbf{A}}\ \mathcal{L}(\mathbf{A},\boldsymbol{\mu}).
\end{equation}
As $\mathcal{L}$ separates across $\!\{\mathbf{a}_k\}$, the infimum is finite if and only if
\begin{equation}\label{eq:PSD_condition}
\mathbf{Q}_k(\boldsymbol{\mu}) \succeq 0,\qquad \forall t=1,\dots,K,
\end{equation}
in which case the infimum is achieved (e.g., at $\mathbf{a}_k = \mathbf{0}$) and equals
\begin{equation}
g(\boldsymbol{\mu}) = \sum\nolimits_{k=1}^K \sum\nolimits_{m=k}^K \mu_{m,k}\gamma_k \sigma_m^2.
\end{equation}
Otherwise, if any $\mathbf{Q}_k(\boldsymbol{\mu})$ has a negative eigenvalue, then $\inf_{\mathbf{a}_k} \mathbf{a}_k^T \mathbf{Q}_k(\boldsymbol{\mu}) \mathbf{a}_k = -\infty$ and hence $g(\boldsymbol{\mu}) = -\infty$. Therefore, the dual function is finite if and only if $\mathbf{Q}_k(\boldsymbol{\mu})\succeq \mathbf{0}$ for all $k$, yielding a convex dual semidefinite program given in \eqref{eq:dual_SDP}. As the primal \gls{SOCP} in \eqref{eq:real_precoding_soc} is convex and (under feasibility) satisfies Slater's condition, strong duality holds and the dual optimum equals the primal optimum.

The \gls{KKT} stationarity condition with respect to $\mathbf{a}_k$ gives
\begin{equation}
\mathbf{Q}_k(\boldsymbol{\mu}^\star)\mathbf{a}_k^\star=\mathbf{0},\qquad k=1,\dots,K,
\end{equation}
which proves \eqref{eq:ak_nullspace}. In the generic rank-deficient case, the nullspace is one-dimensional, so $\mathbf{a}_k^\star$ can be decomposed as in \eqref{eq:ak_dir_power}. Substituting $\mathbf{a}_k=\sqrt{p_k}\mathbf{v}_k$ into the \gls{SINR} constraints yields a linear program in $\{p_k\}$, whose minimum-sum-power solution is obtained by the backward recursion in \eqref{eq:pK_star_joint}-\eqref{eq:pk_star_joint}. This completes the proof.

\bibliographystyle{IEEEtran}
\bibliography{references}

\end{document}